\begin{document}
\newtheorem{definition}{Definition}
\newtheorem{lemma}{Lemma}
\newtheorem{theorem}{Theorem}
\newtheorem{assumption}{Assumption}
\newtheorem{remark}{Remark}

\title{Recurrent Model Predictive Control: Learning an Explicit Recurrent Controller for Nonlinear Systems}

\author{
	\vskip 1em
	{Zhengyu Liu, Jingliang Duan, Wenxuan Wang, Shengbo Eben Li*, Yuming Yin, Ziyu Lin and~Bo Cheng
	}

	\thanks{This study is supported by NSF China with U20A20334 and 52072213, and it is also partially supported by Tsinghua-Toyota Joint Research Institute Cross-discipline Program. Z. Liu and J. Duan have equally contributed to this study. All correspondences should be sent to S. Li with email: lisb04@gmail.com. 
		}
	\thanks{Z. Liu, J. Duan, W. Wang,  S. Li, Z. Lin and B. Cheng are with the School of Vehicle and Mobility, Tsinghua University, Beijing, 100084, China.  {\tt\small Email: (liuzheng17, djl15, wang-wx18, linzy17)@mails.tsinghua.edu.cn; (lishbo, chengbo)@tsinghua.edu.cn}.
}
\thanks{Y. Yin is with the School of Mechanical Engineering, Zhejiang University of Technology, Zhejiang, China. {\tt\small Email: yinyuming89@gmail.com}.
}
}

\maketitle
\thispagestyle{plain}
\pagestyle{plain}
	
\begin{abstract}
This paper proposes an offline control algorithm, called Recurrent Model Predictive Control (RMPC), to solve large-scale nonlinear finite-horizon optimal control problems. It can be regarded as an explicit solver of traditional Model Predictive Control (MPC) algorithms, which can adaptively select appropriate model prediction horizon according to current computing resources, so as to improve the policy performance. Our algorithm employs a recurrent function to approximate the optimal policy, which maps the system states and reference values directly to the control inputs. The output of the learned policy network after $N$ recurrent cycles corresponds to the nearly optimal solution of $N$-step MPC. A policy optimization objective is designed by decomposing the MPC cost function according to the Bellman's principle of optimality. The optimal recurrent policy can be obtained by directly minimizing the designed objective function, which is applicable for general nonlinear and non input-affine systems. Both simulation-based and real-robot path-tracking tasks are utilized to demonstrate the effectiveness of the proposed method.

\end{abstract}

\begin{IEEEkeywords}
Model predictive control, Recurrent function, Dynamic programming
\end{IEEEkeywords}


\definecolor{limegreen}{rgb}{0.2, 0.8, 0.2}
\definecolor{forestgreen}{rgb}{0.13, 0.55, 0.13}
\definecolor{greenhtml}{rgb}{0.0, 0.5, 0.0}

\section{Introduction}
\IEEEPARstart{M}{odel} Predictive Control (MPC) is a well-known method to solve finite-horizon optimal control problems online, which has been extensively investigated in various fields \cite{qin2003survey,vazquez2014model,li2014fast}. 
However, existing MPC algorithms still suffer from a major challenge: relatively low computation efficiency \cite{lee2011model}.

One famous approach to tackle this issue is the moving blocking technique, which assumes constant control input in a fixed portion of the prediction horizon. It increases the computation efficiency by reducing the number of variables to be optimized \cite{cagienard2007move}. This solution cannot guarantee control performance, system stability, and constraint satisfaction. 
In addition, Wang and Boyd (2009) proposed an early termination interior-point method to reduce the calculation time by limiting the maximum number of iterations per time step \cite{wang2009fast}. However, these online methods are still unable to meet the online computing requirement for nonlinear and large-scale systems. 

Some control algorithms choose to calculate a near-optimal explicit policy offline, and then implement it online. Bemporad \emph{et al}. (2002) first proposed the explicit MPC method to increase the computation efficiency, which partitioned the constrained state space into several regions and calculated explicit feedback control laws for each region \cite{bemporad2002explicit}. During online implementation, the onboard computer only needs to choose the corresponding state feedback control law according to the current system state, thereby reducing the burden of online calculation to some extent. Such algorithms are only suitable for small-scale systems, since the required storage capacity grows exponentially with the state dimension \cite{kouvaritakis2002needs}. 

Furthermore, significant efforts have been devoted to approximation MPC algorithms, which can reduce polyhedral state regions and simplify explicit control laws. Geyer \emph{et al}. (2008) provided an optimal merging approach to reduce partitions via merging regions with the same control law \cite{geyer2008optimal}. Jones \emph{et al}. (2010) proposed a polytopic approximation method using double description and barycentric functions to estimate the optimal policy, which greatly reduced the partitions and could be applied to any convex problem \cite{jones2010polytopic}. 
Wen \emph{et al}. (2009) proposed a piecewise continuous grid function to represent an explicit MPC solution, which reduced the requirements of storage capacity and improved online computation efficiency \cite{wen2009analytical}. Borrelli \emph{et al}. (2010) proposed an explicit MPC algorithm which can be executed partially online and partially offline\cite{borrelli2010computation}. In addition, some MPC studies employed a parameterized function to approximate the MPC controller. They updated the function parameters by minimizing the MPC cost function with a fixed prediction horizon through supervised learning or reinforcement learning \cite{aakesson2005neural,aakesson2006neural,cheng2015neural}.

Note that the policy performance and the computation time for each step usually increase with the length of the prediction horizon. The above-stated algorithms usually have to make a trade-off between control performance and computation time, and then select a conservative fixed prediction horizon to meet the requirement of real-time decision-making. However, on-board computing resources are usually dynamically changing, so these algorithms usually lead to calculation timeouts or resources waste. In other words, these algorithms cannot adapt to the dynamic allocation of computing resources and make full use of the available computing time to select the longest model prediction horizon.

In this paper, we propose an offline MPC algorithm, called Recurrent MPC (RMPC), for finite-horizon optimal control problems with large-scale nonlinearities and nonaffine inputs. Our main contributions can be summarized as below:
\begin{enumerate}
	\item A recurrent function is employed to approximate the optimal policy, which maps the system states and reference values directly to the control inputs. Compared to previous algorithms employing non-recurrent functions (such as fully connected neural networks), which are only suitable for fixed-horizon predictive control \cite{aakesson2005neural,aakesson2006neural,cheng2015neural}, the inclusion of the recurrent structure allows the algorithm to select an appropriate prediction horizon according to current computing resources. In particular, the output of the learned policy function after $N$ recurrent cycles corresponds to the nearly optimal solution of $N$-step MPC.
	\item A policy optimization objective is designed by decomposing the MPC cost function according to the Bellman's principle of optimality. The optimal recurrent policy can be obtained by directly minimizing the designed objective function. Therefore, unlike most explicit MPC algorithms \cite{bemporad2002explicit, kouvaritakis2002needs,geyer2008optimal,jones2010polytopic,wen2009analytical,borrelli2010computation} that can only handle linear systems, the proposed algorithm is applicable for nonlinear and non input-affine systems.  
	\item RMPC calculates a near-optimal recurrent policy offline, and then implement online. Experiment shows that it is over 5 times faster than the traditional online MPC algorithms \cite{Andreas2006Biegler,bonami2008algorithmic} under the same problem scale.
\end{enumerate}

The paper is organized as follows. In Section \ref{sec:pre}, we provide the formulation of the MPC problem. Section \ref{sec:RMPC} presents RMPC algorithm and proves its convergence. In Section \ref{sec:simulation}, we perform a hardware-in-the-loop (HIL) simulation to demonstrate the generalizability and effectiveness of RMPC. Section \ref{sec:experiment} verifies the performance of RMPC in a four-wheeled robot, and Section \ref{sec:conclusion} concludes this paper.

\section{Preliminaries}
\label{sec:pre}
Consider the general time-invariant discrete-time dynamic system
\begin{equation}
\label{eq.system}
    x_{i+1}=f(x_{i}, u_{i})
\end{equation}
with state $x_{i}\in \mathcal{X} \subset \mathbb{R}^{n}$, control input $u_{i}\in \mathcal{U} \subset \mathbb{R}^{m}$, 
and the system dynamics function $f:\mathbb{R}^{n} \times \mathbb{R}^{m} \to \mathbb{R}^{n}$. We assume that $f(x_i,u_i)$ is Lipschitz continuous on a compact set $\mathcal{X}$, and the system is stabilizable on $\mathcal{X}$.

The $N$-step Model Predictive Control (MPC) problem without state constraints is given as
\begin{equation}
\label{eq.valuedefinition}
\begin{aligned}
&\min_{u^{N}_{0},\cdots, u^{N}_{N-1}}  V(x_{0},r_{1:N},N)={\sum_{i=1}^{N}l(x_{i},r_{i},u^{N}_{i-1}(x_{0},r_{1:N}))}\\
&\qquad \text{s.\;t.}  \qquad  \eqref{eq.system}, \  u\in \mathcal{U},
\end{aligned}
\end{equation}
where $V(x_{0},r_{1:N},N)$ is the cost function, $x_{0}$ is initial state, $N$ is length of prediction horizon, $r_{1:N}=[r_{1},r_{2},\cdots,r_{N}]$ is reference trajectory, $u^{N}_{i-1}$ is the control input of the $i$th step in $N$-step prediction, and $l\geq0$ is the utility function. The purpose of MPC is to find the optimal control sequence to minimize the objective $V(x_{0},r_{1:N},N)$, which can be denoted as
\begin{equation}
\label{eq.control_sequence}
\begin{aligned}
	\left[{u^{N}_{0}}^*(x_{0},r_{1:N}),{u^{N}_{1}}^*(x_{0},r_{1:N}),\cdots,{u^{N}_{N-1}}^*(x_{0},r_{1:N})\right]
    \\=\mathop{\arg\min}_{u^{N}_{0},u^{N}_{1},\cdots,u^{N}_{N-1}}V(x_{0},r_{1:N},N),
\end{aligned}
\end{equation}
where the superscript $^*$ represents optimum.

\section{Recurrent Model Predictive Control}
\label{sec:RMPC}
\subsection{Recurrent Policy Function}
In practical applications, we only need to execute the first control input ${u^{N}_{0}}^*(x_{0},r_{1:N})$ of the optimal sequence in \eqref{eq.control_sequence} at each time step. Given a control problem, assume that $N_{\text{max}}$ is the maximum feasible prediction horizon. We aim to make full use of computation resources and adaptively select the longest prediction horizon $k\in[1,N_{\text{max}}]$, which means we need to calculate and store the optimal control input ${u^{k}_0}^*(x,r_{1:k})$ of $\forall x\in \mathcal{X}$, $\forall r_{1:k}$, and $\forall k\in[1,N_{\text{max}}]$ in advance. This requires us to find an efficient way to represent the policy for different prediction horizon $N\in[1,N_{\text{max}}]$ and solve it offline.

We first introduce a recurrent function, denoted as $\pi^{c} (x_{0},r_{1:c};\theta)$, to approximate the control input ${u^{c}_{0}}^*(x_{0},r_{1:c})$, where $\theta$ is the vector of function parameters and $c$ is the number of recurrent cycles of the policy function. The goal of the proposed Recurrent MPC (RMPC) algorithm is to find the optimal parameters $\theta^*$, such that
\begin{equation}
\label{eq.equation_optimal}
\begin{aligned}
    \pi^{c}(x_{0},r_{1:c};\theta^*)&={u^{c}_{0}}^*(x_{0},r_{1:c}),\\
    \forall x_{0}&\in \mathcal{X}, \forall r_{1:c} , \forall c \in [1,N_{\text{max}}].
\end{aligned}
\end{equation}
The structure of the recurrent policy function is illustrated in Fig. \ref{fig_structure}. All recurrent cycles share the same parameters $\theta$, where $h_c\in\mathbb{R}^q$ is the vector of hidden states.
\begin{figure}[htb]
\captionsetup{justification =raggedright,
              singlelinecheck = false,labelsep=period, font=small}
\centering{\includegraphics[width=0.35\textwidth]{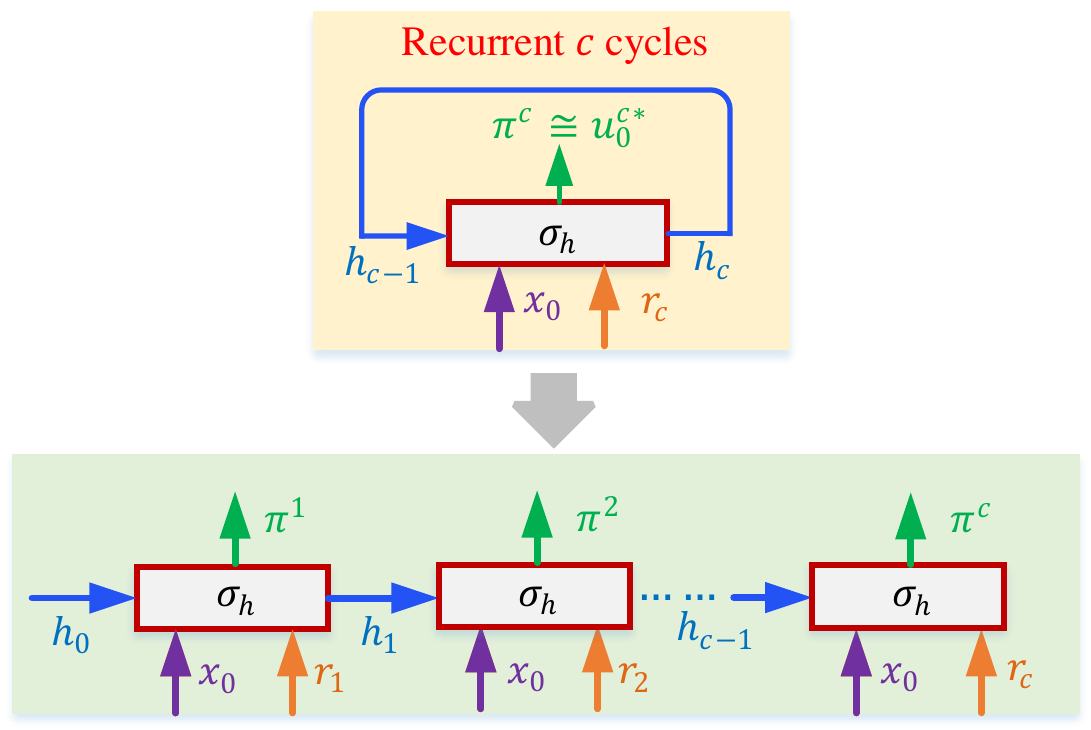}}
\caption{The structure of the recurrent policy function.}
\label{fig_structure}
\end{figure}

Each recurrent cycle is mathematically described as 
\begin{equation}
\label{eq.recurrentstructure}
\begin{aligned}
    &h_c=\sigma_h(x_{0},r_{c},h_{c-1};\theta_{h}),
    \\&\pi^c (x_{0},r_{1:c};\theta)=\sigma_y(h_c;\theta_{y}),
    \\&c\in[1,N_{\text{max}}], \theta=\theta_{h}\mathop{\cup}\theta_{y},
\end{aligned}
\end{equation}
where $h_0=0$, $\sigma_h$, and $\sigma_y$ are activation functions of hidden layer and output layer, respectively. 

As shown in Fig. \ref{fig_structure}, the recurrent policy function outputs a control input at each recurrent cycle. Assuming that we have found the optimal parameters $\theta^*$, it follows that the output of the $c$th cycle  $\pi^c(x_{0},r_{1:c};\theta^*)={u^{c}_{0}}^*(x_{0},r_{1:c})$ for $\forall c\in[1,N_{\text{max}}]$. This indicates that the more cycles, the longer the prediction horizon. In practical applications, the calculation time of each cycle $t_c$ is different due to the dynamic change of computing resource allocation (see Fig. \ref{fig_resource}). At each time step, the total time assigned to the control input calculation is assumed to be $T$. Denoting the final number of the recurrent cycles at each time step as $k$, then the corresponding control input is $\pi^k(x_{0},r_{1:k};\theta^*)$, where
\begin{equation}
\nonumber
k=\left\{
\begin{aligned}
&N_\text{max}, &\quad \sum_{c=1}^{N_{\text{max}}}t_c\le T, \\
&p, &\quad \sum_{c=1}^{p}t_c\le T < \sum_{c=1}^{p+1}t_c.
\end{aligned}
\right.
\end{equation}
Therefore, the recurrent policy is able to make full use of computing resources and adaptively select the longest prediction step $k$. In other word, the more computing resources allocated, the longer prediction horizon will be selected, which would usually lead to better control performance.
\begin{figure}[!htb]
\captionsetup{justification =raggedright,
              singlelinecheck = false,labelsep=period, font=small}
\centering{\includegraphics[width=0.4\textwidth]{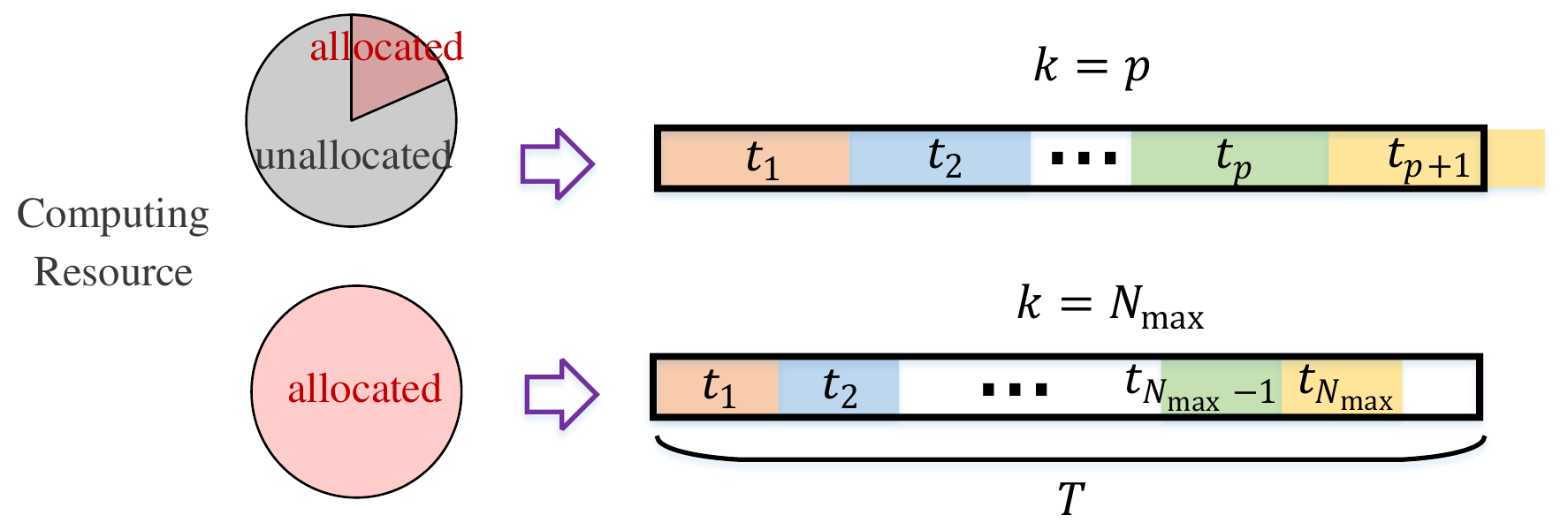}}
\caption{Maximum recurrent cycles in different cases.}
\label{fig_resource}
\end{figure}

\begin{remark}
Existing MPC algorithms usually employ non-recurrent approximation functions to represent the policies \cite{aakesson2005neural,aakesson2006neural,cheng2015neural}, which must select a fixed prediction horizon in advance. When the prediction horizon changes, the optimization problem must be reconstructed to learn a new corresponding policy. Conversely, RMPC employs recurrent function to approximate the optimal policy, which maps the system states and reference values directly to the control inputs. The use of recurrent structure allows RMPC to select an appropriate model prediction horizon according to current computing resources. The output of the learned policy network after $N$ recurrent cycles corresponds to the nearly optimal solution of $N$-step MPC.
\end{remark}

\subsection{Objective Function for Policy Learning}
To find the optimal parameters $\theta^*$ offline, we first need to represent the MPC cost function in \eqref{eq.valuedefinition} in terms of $\theta$, denoted by $V(x_{0},r_{1:N},N; \theta)$. From \eqref{eq.valuedefinition} and the Bellman's principle of optimality, the global minimum $V^*(x_{0},r_{1:N},N)$ can be expressed as
\begin{equation}
\nonumber
\begin{aligned}
   V^*(&x_{0},r_{1:N},N)
   \\&=l(x_{1},r_{1},{u^{N}_0}^*(x_0,r_{1:N}))+V^*(x_{1},r_{2:N},N-1)\\
   &=\sum_{i=1}^{2}l(x_i,r_i,{u^{N-i+1}_{0}}^*(x_{i-1},r_{i:N})) +\\
   &\qquad\qquad\qquad\qquad \qquad\qquad\qquad V^*(x_{2},r_{3:N},N-2)\\
   &\vdots\\
   &=\sum_{i=1}^{N-1}l(x_i,r_i,{u^{N-i+1}_{0}}^*(x_{i-1},r_{i:N})) +V^*(x_{N-1},r_N, 1)\\
   &=\sum_{i=1}^{N}l(x_i,r_i,{u^{N-i+1}_{0}}^*(x_{i-1},r_{i:N})).
\end{aligned}
\end{equation}
Furthermore, according to \eqref{eq.valuedefinition}, one has
\begin{equation}
\label{eq.optimal_V}
\begin{aligned}
V^*(x_{0},r_{1:N},N)&=\sum_{i=1}^{N}l(x_{i},r_{i},{u^{N}_{i-1}}^*(x_{0},r_{1:N}))\\
&=\sum_{i=1}^{N}l(x_i,r_i,{u^{N-i+1}_{0}}^*(x_{i-1},r_{i:N})).
\end{aligned}
\end{equation}
Therefore, for the same $x_0$ and $r_{1:N}$, it is clear that 
\begin{equation}
\label{eq.optimalforanyone}
{u^{N}_{i-1}}^*(x_{0},r_{1:N})={u^{N-i+1}_{0}}^*(x_{i-1},r_{i:N}),\quad \forall i\in[1,N].
\end{equation}This indicates that the $i$th optimal control input ${u^{N}_{i-1}}^*(x_{0},r_{1:N})$ in \eqref{eq.control_sequence} can be regarded as the optimal control input of the $N$-$i$+$1$-step MPC control problem with initial state $x_{i-1}$. Hence, by replacing all $u^{N}_{i-1}(x_{0},r_{1:N})$ in \eqref{eq.valuedefinition} with $u^{N-i+1}_{0}(x_{i-1},r_{i:N})$, the cost function of $N$-step MPC can be rewritten as  $$V(x_{0},r_{1:N},N)=\sum_{i=1}^{N}l(x_i,r_i,u^{N-i+1}_{0}(x_{i-1},r_{i:N})).$$
Immediately, we can obtain the $N$-step cost function in terms of $\theta$:
\begin{equation}
\label{eq.appro_V}
V(x_{0},r_{1:N},N;\theta)=\sum_{i=1}^{N}l(x_i,r_i,\pi^{N-i+1}(x_{i-1},r_{i:N};\theta))
\end{equation}
Fig. \ref{fig:objective} illustrates the reshaped $N$-step cost function intuitively.
\begin{figure}[!htb]
\captionsetup{justification =raggedright,
              singlelinecheck = false,labelsep=period, font=small}
\centering{\includegraphics[width=0.4\textwidth]{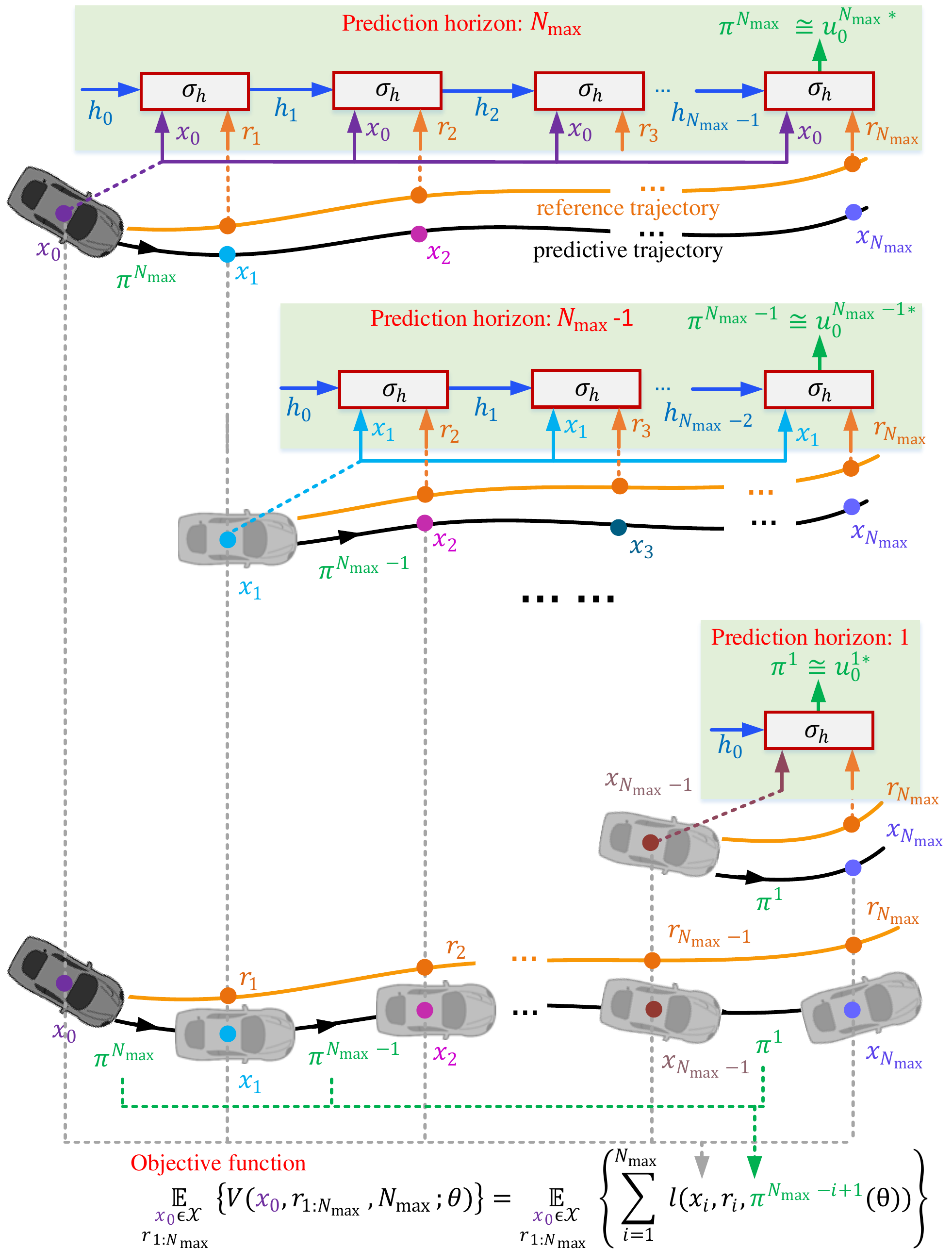}}
\caption{Illustration of the objective function of RMPC.}
\label{fig:objective}
\end{figure}

To find the optimal parameters $\theta^*$ that make \eqref{eq.equation_optimal} hold, we can construct the following objective function:
\begin{equation}
\label{eq.lossfunction}
    J(\theta)=\mathop{\mathbb{E}}_{\substack{x_0\in\mathcal{X}, r_{1:{N_{\text{max}}}}}}\Big\{V(x_{0},r_{1:{N_{\text{max}}}},N_{\text{max}};\theta)\Big\}.
\end{equation}
Therefore, we can update $\theta$ by directly minimizing $J(\theta)$. The policy update gradients can be derived as
\begin{equation}   
\label{eq.updatagradient}
\begin{aligned}
\frac{\text{d}J}{\text{d}\theta}&=\mathop{\mathbb{E}}_{
\small\begin{array}{ccc}
\small x_0\in\mathcal{X}, r_{1:N_{\text{max}}}\\
\end{array} } \Big\{\frac{\text{d}V(x_{0},r_{1:N_{\text{max}}},N_{\text{max}};\theta)}{\text{d}\theta}\Big\},\\
\end{aligned}
\end{equation}
where
\begin{equation}
\nonumber
\begin{aligned}
&\frac{\text{d}V(x_{0},r_{1:N},N_{\text{max}};\theta)}{\text{d}\theta}=\\
&\qquad\qquad\qquad\sum_{i=1}^{N_{\text{max}}}\frac{\text{d}l(x_{i},r_{i},\pi^{N_{\text{max}}-i+1}(x_{i-1},r_{i:N_{\text{max}}};\theta))}{\text{d} \theta}.
\end{aligned}
\end{equation}
Denoting $\pi^{N_{\text{max}}-i+1}(x_{i-1},r_{i:N_{\text{max}}};\theta)$ as $\pi^{N_{\text{max}}-i+1}$, $l(x_{i},r_{i},\pi^{N_{\text{max}}-i+1}(x_{i-1},r_{i:N_{\text{max}}};\theta))$ as $l_{i}$, $\frac{\mathrm{d}x_i}{\mathrm{d}\theta}$ as $\phi_{i}$ and $\frac{\mathrm{d}\pi^{N_{\text{max}}-i+1}}{\mathrm{d}\theta}$ as $\psi_{i}$, we further have
\begin{equation}
\nonumber
\frac{\text{d} V(x_{0},r_{1:N_{\text{max}}},N_{\text{max}};\theta)}{\text{d} \theta}=\sum_{i=1}^{N_{\text{max}}}
\Big\{\frac{\partial l_{i}}{\partial x_{i}}\phi_{i}+\frac{\partial l_{i}}{\partial \pi^{N_{\text{max}}-i+1}}\psi_{i}\Big\},
\end{equation}
where
\begin{equation}
\nonumber
\begin{aligned}
\phi_{i} &=\frac{\partial f(x_{i-1},\pi^{N_{\text{max}}-i+1})}{\partial x_{i-1}}\phi_{i-1} +\frac{\partial f(x_{i-1},\pi^{N_{\text{max}}-i+1})}{\partial \pi^{N_{\text{max}}-i+1}}\psi_{i},
\end{aligned}
\end{equation} 
with $\phi_0=0$, and
\begin{equation}
\nonumber
\psi_{i} =\frac{\partial\pi^{N_{\text{max}}-i+1}}{\partial x_{i-1}}\phi_{i-1} + \frac{\partial\pi^{N_{\text{max}}-i+1}}{\partial \theta}.
\end{equation} 
Fig. \ref{fig_gradient} visually shows the backpropagation path of the policy gradients.

\begin{figure}[!htb]
\captionsetup{justification =raggedright,
              singlelinecheck = false,labelsep=period, font=small}
\centering{\includegraphics[width=0.4\textwidth]{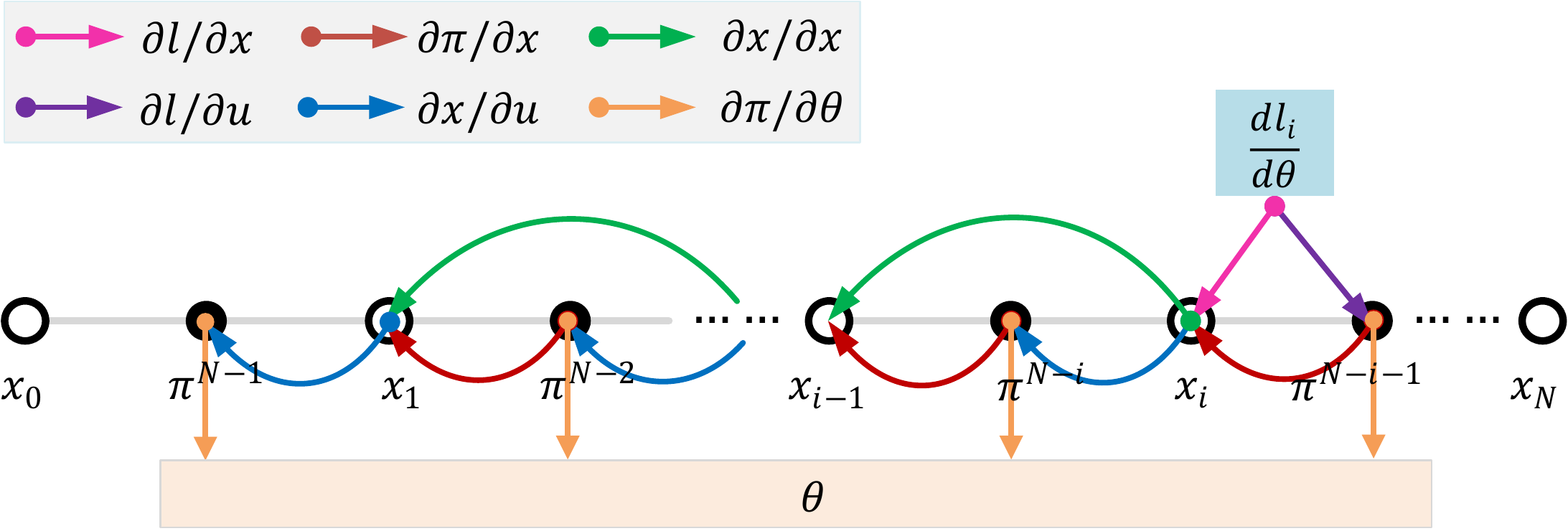}}
\caption{Backprogagation Path of the Policy Update Gradients}
\label{fig_gradient}
\end{figure}

Taking the Gradient Descent (GD) method as an example, the policy update rule is
\begin{equation}
\label{eq.update_rule}
\begin{aligned}
\theta_{K+1} &= -\alpha_{\theta} \frac{\text{d}J}{\text{d}\theta} + \theta_K,
\end{aligned}
\end{equation}
where $\alpha_{\theta} $ denotes the learning rate and $K$ indicates $K$th iteration. 

The pseudo-code and diagram of the proposed RMPC algorithm are shown in  Algorithm \ref{alg:RMPC} and Fig. \ref{fig:flowchart}. 

\begin{algorithm}[!htb]
\caption{RMPC algorithm}
\label{alg:RMPC}
\begin{algorithmic}
   \STATE Given an appropriate learning rate $\alpha_\theta$ and an arbitrarily small positive number $\epsilon$.
   \STATE Initial with arbitrary $\theta_0$
\REPEAT 
\STATE Randomly select $x_0\in \mathcal{X}$ and the corresponding  $r_{1:N_{\text{max}}}$
\STATE Calculate $\frac{\text{d}J(\theta_K)}{\text{d}\theta_K}$ using \eqref{eq.updatagradient}
\STATE Update policy function using \eqref{eq.update_rule}
\UNTIL $|J(\theta_{K+1})-J(\theta_{K})|\le \epsilon$
\end{algorithmic}
\end{algorithm}

\begin{figure}[!htb]
\captionsetup{justification =raggedright,
              singlelinecheck = false,labelsep=period, font=small}
\centering{\includegraphics[width=0.3\textwidth]{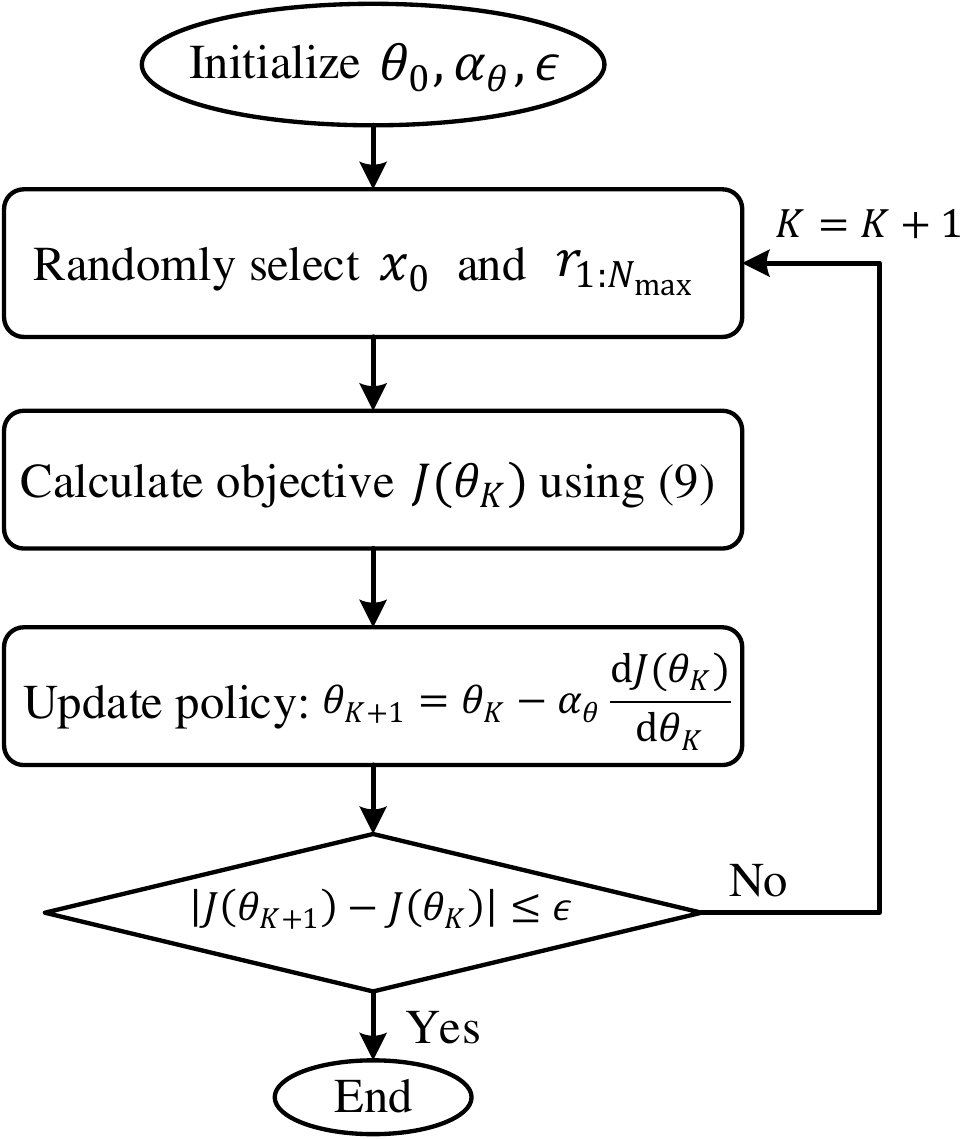}}
\caption{The training flowchart of RMPC.}
\label{fig:flowchart}
\end{figure}

\begin{remark}
Most existing explicit MPC algorithms \cite{bemporad2002explicit, kouvaritakis2002needs,geyer2008optimal,jones2010polytopic,wen2009analytical,borrelli2010computation} can only handle linear systems. As a comparison, RMPC is applicable for general nonlinear and non input-affine systems, because the nearly optimal recurrent policy can be obtained by directly minimizing the designed objective function using policy gradient methods.
\end{remark}

\begin{remark}
The objective function of RMPC in \eqref{eq.lossfunction} is identical to the traditional MPC problem \eqref{eq.valuedefinition}. The only difference between them is that RMPC aims to find an explicit nearly optimal recurrent policy rather than numerical solutions. Therefore, RMPC can be regarded as a special explicit solver for the traditional MPC problem in \eqref{eq.valuedefinition}.
\end{remark}

\subsection{Convergence and Optimality}
There are many types of recurrent functions belonging to the structure defined in \eqref{eq.recurrentstructure}, and the recurrent neural network (RNN) is the most commonly used one. In recent years, deep RNNs have been successfully implemented in many fields, such as natural language processing and system control, attributing to their ability to process sequential data \cite{mikolov2010recurrent,li2017novel}. Next, we will show that as the iteration index $K\rightarrow\infty$, the optimal policy $\pi^c (x_{0},r_{1:c};\theta^*)$ that makes \eqref{eq.equation_optimal} hold can be achieved using Algorithm \ref{alg:RMPC}, as long as $\pi^c (x_{0},r_{1:c};\theta)$ is an over-parameterized RNN. The over-parameterization means that the number of hidden neurons and layers is sufficiently large. Before the main theorem, the following lemma and assumption need to be introduced. 

\begin{lemma}[Universal Approximation Theorem\cite{li1992approximation,schafer2007recurrent,hammer2000approximation}]
\label{lemma.ability}
Consider a sequence of finite functions $\{F^i(y^i)\}_{i=1}^n$, where $n$ is the number of functions, $y^i=[y_1,y_2,\hdots,y_i]\in\mathbb{R}^i$, $i\in [1,n]$ is the input dimension, and $F^i(y^i): \mathbb{R}^i\rightarrow \mathbb{R}^d$ is a continuous function on a compact set. Describe an RNN ${G}^c (y^c;W,b)$ as
\begin{equation}
\nonumber
\begin{aligned}
    &h_c=\sigma_h(W_y^\top y^{c}+U_h^\top h_{c-1}+b_y),\\
    &{G}^c (y^c;W,b)=\sigma_y(W_h^\top h_{c}+b_h),
\end{aligned}
\end{equation}
where $c$ is the number of recurrent cycles,  $W=W_h\mathop{\cup}W_y$, $b=b_h\mathop{\cup}b_y$ and $U_h$  are parameters, $\sigma_h$ and  $\sigma_y$ are activation functions.  Supposing ${G}^c (y^c;W,b)$ is over-parameterized, for any $\{F^i(y^i)\}_{i=1}^n$, $\exists U_h, W, b$, such that
\begin{equation}
\nonumber
 \left \| {G}^c(y^c; W,b)-F^c(y^c) \right \|_{\infty} \le \epsilon, \quad \forall y^c\in\mathbb{R}^c, c \in [1,n],
\end{equation} 
where $\epsilon \in\mathbb{R^+}$ is an arbitrarily small error.
\end{lemma}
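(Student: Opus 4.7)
The plan is to reduce the recurrent universal approximation problem to the classical feedforward UAT of Cybenko and Hornik, with two extra ingredients: a hidden-state construction that carries the input history, and an error-propagation argument that controls the compounding of approximation errors across recurrent cycles.

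First, for any fixed $c$, I would view the map $y^c \mapsto G^c(y^c;W,b)$ as a feedforward network with tied weights, consisting of $c$ copies of the transition block $\sigma_h$ followed by the readout $\sigma_y$. Classical UAT is then the natural tool, but the distinctive demand here is that a single tuple $(U_h, W, b)$ must work simultaneously for every $c \in [1,n]$. To meet this I would design $h_c$ to faithfully encode the history: since over-parameterization allows the hidden dimension $q$ to be chosen arbitrarily large, I target
\begin{equation}
\nonumber
h_c \approx (y_1, y_2, \ldots, y_c, 0, \ldots, 0, e_c),
\end{equation}
where $e_c$ is a counter slot indicating the current cycle. The transition rule must implement a shift-and-insert on the first coordinates and increment the counter; this is a continuous piecewise-linear map on the compact domain, so the feedforward UAT applied to $\sigma_h$ can approximate it to any tolerance $\delta>0$ using enough hidden units.

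Next, with $h_c$ encoding both $y^c$ and the cycle index, the readout $\sigma_y(W_h^\top h_c + b_h)$ only needs to approximate the composite map $h_c \mapsto F^c(y^c)$. Because there are only finitely many targets $\{F^i\}_{i=1}^n$ and the counter distinguishes among them, a single readout can realize all $F^c$ simultaneously, again by classical UAT applied to $\sigma_y$ with sufficient capacity. To close the argument, I would bound error propagation: letting $L$ be a Lipschitz modulus of the true transition on the compact domain, the hidden-state error at cycle $c$ accumulates at most like $\delta\,(1+L+L^2+\cdots+L^{c-1})$, which is uniformly bounded over $c\in[1,n]$. Choosing $\delta$ small enough in terms of $\epsilon$, $L$, and the Lipschitz modulus of $\sigma_y$ yields the desired bound $\|G^c-F^c\|_\infty \le \epsilon$ for every $c\in[1,n]$.

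The main obstacle I expect is precisely this simultaneity requirement: one parameter set must realize the correct input-output behavior at every recurrent depth. The counter-augmented hidden state is the technical device that makes this possible, reducing the problem to finitely many feedforward approximation tasks glued together by a shift operator. Verifying that the cited RNN-specific versions of the UAT indeed supply the uniform-in-$c$ guarantee, and keeping the bookkeeping of accumulated errors tight on the compact domain, is where most of the real work lies.
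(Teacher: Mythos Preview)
The paper does not prove this lemma at all: it is stated as a cited result from the literature (the references in the lemma heading) and is then used as a black box in the proof of Theorem~\ref{theorem.optimality}. There is therefore no ``paper's own proof'' against which to compare your proposal.

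That said, your sketch is a sensible route toward an independent proof and is broadly in the spirit of the cited RNN approximation results: encode the input history and a cycle counter in a sufficiently wide hidden state, realize the shift-and-insert transition approximately, read out $F^c$ from the augmented state, and control the finite-depth error accumulation via Lipschitz bounds. One point to tighten: in the lemma as written, the transition is a \emph{single} affine map followed by the componentwise nonlinearity $\sigma_h$, not an arbitrary multilayer block. So when you say ``feedforward UAT applied to $\sigma_h$ can approximate the shift-and-insert map,'' you are really invoking single-hidden-layer UAT with width growing in the over-parameterized regime; make that explicit, and verify that your chosen $\sigma_h$ satisfies the hypotheses of whichever single-layer UAT you cite. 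With that clarification and the bookkeeping you outline, the argument goes through for the finitely many depths $c\in[1,n]$.
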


The reported experimental results and theoretical proofs have shown that the straightforward optimization methods, such as GD and Stochastic GD (SGD), can find global minima of most training objectives in polynomial time if the approximate function is an over-parameterized NN or RNN \cite{allen2019convergence,du2019gradient}. Based on this fact, we make the following assumption.

\begin{assumption} 
\label{assumption.global}
If the approximate function is an over-parameterized RNN, the global minimum of the objective function in \eqref{eq.lossfunction} can be found using an appropriate optimization algorithm such as SGD \cite{allen2019convergencernn}.
\end{assumption}

Now, we are ready to show the convergence and optimality of RMPC.

\begin{theorem}[Recurrent Model Predictive Control]
\label{theorem.optimality}
Suppose $\pi^c (x_{0},r_{1:c};\theta)$ is an over-parameterized RNN. Through Algorithm \ref{alg:RMPC}, any initial parameters $\theta_0$ will converge to $\theta^*$, such that \eqref{eq.equation_optimal} holds.
\end{theorem}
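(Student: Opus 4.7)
The plan is to combine Lemma \ref{lemma.ability} (representational sufficiency of the over-parameterized RNN) with Assumption \ref{assumption.global} (global convergence of SGD in the over-parameterized regime), using the Bellman decomposition \eqref{eq.optimal_V}--\eqref{eq.appro_V} as the bridge between the variational problem \eqref{eq.lossfunction} and the pointwise identity \eqref{eq.equation_optimal}. The strategy is to show that the infimum of $J(\theta)$ equals the expected optimal cost and that equality forces the recurrent policy to coincide with the optimal control at every horizon.

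First I would establish a pointwise lower bound. By the very definition of $V^*$ in \eqref{eq.valuedefinition} together with the rewriting \eqref{eq.appro_V}, we have
\begin{equation}
\nonumber
V(x_{0},r_{1:N_{\text{max}}},N_{\text{max}};\theta)\;\ge\;V^{*}(x_{0},r_{1:N_{\text{max}}},N_{\text{max}})
\end{equation}
for every $\theta$ and every $(x_0,r_{1:N_{\text{max}}})$, with equality if and only if $\pi^{N_{\text{max}}-i+1}(x_{i-1},r_{i:N_{\text{max}}};\theta)={u_{0}^{N_{\text{max}}-i+1}}^{*}(x_{i-1},r_{i:N_{\text{max}}})$ for every $i\in[1,N_{\text{max}}]$, where $x_{i-1}$ is the state generated along the $\theta$-trajectory. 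Taking expectation gives $J(\theta)\ge J^{\star}:=\mathbb{E}[V^{*}]$. Next I would show this bound is attainable: apply Lemma \ref{lemma.ability} with the $n=N_{\text{max}}$ target functions $F^{c}(x_0,r_{1:c}):={u_{0}^{c}}^{*}(x_0,r_{1:c})$, which are continuous on the compact set $\mathcal{X}\times\mathcal{R}^{c}$ (continuity of the optimal control follows from Lipschitz continuity of $f$ and the structure of \eqref{eq.valuedefinition}). Lemma \ref{lemma.ability} yields parameters $\theta'$ with $\|\pi^{c}(\cdot;\theta')-{u_{0}^{c}}^{*}(\cdot)\|_{\infty}\le\epsilon$ uniformly in $c$; Lipschitz propagation through the dynamics and the utility then gives $J(\theta')\le J^{\star}+O(\epsilon)$, so $\inf_{\theta}J(\theta)=J^{\star}$.

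Second, I would invoke Assumption \ref{assumption.global}: since $\pi^{c}$ is an over-parameterized RNN, the iterates produced by Algorithm \ref{alg:RMPC} converge to a global minimizer $\theta^{*}$ of $J$, so $J(\theta^{*})=J^{\star}$. Combining this with the pointwise inequality from the first step, the non-negative integrand $V(\cdot;\theta^{*})-V^{*}(\cdot)$ has zero expectation and is therefore zero almost surely on the support of the sampling distribution. The equality case of the Bellman decomposition then forces $\pi^{N_{\text{max}}-i+1}(x_{i-1},r_{i:N_{\text{max}}};\theta^{*})={u_{0}^{N_{\text{max}}-i+1}}^{*}(x_{i-1},r_{i:N_{\text{max}}})$ along every sampled trajectory, for every $i\in[1,N_{\text{max}}]$. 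Since as $i$ ranges over $[1,N_{\text{max}}]$ the recurrent depth $c=N_{\text{max}}-i+1$ sweeps the whole interval $[1,N_{\text{max}}]$, and since the sampling of $(x_{0},r_{1:N_{\text{max}}})$ has full support on $\mathcal{X}$ together with the fact that the realized intermediate states $(x_{i-1},r_{i:N_{\text{max}}})$ cover $\mathcal{X}\times\mathcal{R}^{c}$, the continuity of both sides extends the equality to all $x_{0}\in\mathcal{X}$, all $r_{1:c}$, and all $c\in[1,N_{\text{max}}]$, which is exactly \eqref{eq.equation_optimal}.

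The main obstacle, in my view, is not the SGD convergence (which is outsourced to Assumption \ref{assumption.global}) but the final extension from ``optimal along sampled trajectories'' to ``optimal for every $(x,r_{1:c},c)$.'' The recurrent policy is trained only at states that appear as successors along $\theta^{*}$-trajectories of sampled initial conditions, and the argument that these successor states fill $\mathcal{X}$ requires full-support sampling of $x_{0}$ and $r_{1:N_{\text{max}}}$, continuity of $\pi^{c}(\cdot;\theta^{*})$, and essentially a controllability/reachability hypothesis implicit in the stabilizability assumption on $\mathcal{X}$. A secondary subtlety is ensuring that the minimizer of the instantaneous utility $l$ at each step is unique, so that the equality case of the Bellman identity pins down the control value rather than only the cost; under standard strictly convex utility choices this is automatic, and I would flag this as a mild regularity condition if it is not already subsumed by the assumptions stated in Section \ref{sec:pre}.
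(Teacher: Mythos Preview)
Your proof follows essentially the same route as the paper's: use Lemma~\ref{lemma.ability} to show that the infimum of $J$ equals $\mathbb{E}[V^{*}]$ and is attained, invoke Assumption~\ref{assumption.global} to reach a global minimizer, and then use the Bellman decomposition \eqref{eq.optimal_V}--\eqref{eq.optimalforanyone} to convert optimality of $J$ into the pointwise identity \eqref{eq.equation_optimal}. Your version is in fact more careful than the paper's terse argument---the paper simply asserts the existence of a single $\theta^{\dagger}$ minimizing $V(x_{0},r_{1:N_{\max}},N_{\max};\theta)$ simultaneously for all $(x_{0},r_{1:N_{\max}})$ and directly identifies it with $\theta^{*}$, whereas you handle the $\epsilon$-approximation from Lemma~\ref{lemma.ability} via a limiting argument, make explicit the passage from zero expectation of a nonnegative integrand to almost-sure equality, and flag the coverage and uniqueness subtleties that the paper leaves implicit.
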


\begin{proof}
By \eqref{eq.lossfunction}, we have
\begin{equation}
\nonumber
\begin{aligned}
\min_{\theta}J(\theta)&=\min_{\theta}\mathop{\mathbb{E}}_{\substack{x_0\in\mathcal{X},r_{1:{N_{\text{max}}}}}}\Big\{V(x_{0},r_{1:{N_{\text{max}}}},N_{\text{max}};\theta)\Big\}\\
&\ge\mathop{\mathbb{E}}_{\substack{x_0\in\mathcal{X},r_{1:{N_{\text{max}}}}}}\Big\{\min_{\theta}V(x_{0},r_{1:{N_{\text{max}}}},N_{\text{max}};\theta)\Big\}.
\end{aligned}
\end{equation}
By Lemma \ref{lemma.ability}, there always $\exists \theta^{\dagger}$, such that 
\begin{equation}
\centering
\nonumber
\theta^{\dagger}=\arg\min_{\theta}V(x_{0},r_{1:N_{\text{max}}},N_{\text{max}};\theta),\quad \forall x_0\in \mathcal{X},r_{1:N_{\text{max}}}.
\end{equation}
Then, it directly follows that
\begin{equation}
\centering
\nonumber
J(\theta^{\dagger})=\min_\theta J(\theta).
\end{equation}
Furthermore, according to \eqref{eq.optimal_V}, \eqref{eq.optimalforanyone}, and the Bellman's principle of optimality, $\theta^\dagger$ can also make \eqref{eq.equation_optimal} hold, i.e., $\theta^\dagger=\theta^*$. Note that $\theta^\dagger$ may not be unique.
From Assumption \ref{assumption.global}, we can always find $\theta^*$  by repeatedly minimizing $J(\theta)$ in \eqref{eq.lossfunction} using \eqref{eq.update_rule}, which completes the proof.
\end{proof}

Thus, we have proven that the RMPC algorithm can converge to $\theta^*$. In other words, it can find the explicit nearly optimal policy of MPC with different prediction horizons, whose output after $c$th recurrent cycles corresponds to the nearly optimal solution of $c$-step MPC.

\begin{remark}
Theorem \ref{theorem.optimality} shows that under mild assumptions, the output of the converged policy of RMPC is exactly the optimal solution of the traditional MPC problem in \eqref{eq.valuedefinition}. This means that if RMPC converges to the nearly optimal solution, it would inherit the stability property of the original MPC problem. The stability conditions for the case where the approximate error of the learned recurrent policy cannot be ignored will be established in further studies.
\end{remark}

\section{Simulation Verification}
\label{sec:simulation}

In order to evaluate the performance of the proposed RMPC algorithm, we choose the vehicle lateral control problem in the path tracking task as an example \cite{li2017driver}. It is a nonlinear and non-affine control problem and a widely-used verification and application task for MPC \cite{duan2021adaptive, cheng2020model,ji2016path}. 

\subsection{Overall Settings}
The recurrent policy network is trained offline on the PC, and then deployed to the industrial personal computer (IPC). The vehicle dynamics used for policy training are different from the controlled plant, which is provided by the Carsim simulator \cite{benekohal1988carsim}. For online applications, the IPC-controller gives the control signal to the plant according to the state information and the reference trajectory. The plant feeds back the state information to the IPC-controller, so as to realize the closed-loop control process. The feedback scheme of the HIL experiment is depicted in Fig. \ref{fig_HIL}. The type of IPC-controller is ADLINK MXE-5501, equipped with Intel i7-6820EQ CPU and 8GB RAM, which is 
used as a vehicle onboard controller \cite{Chaoyi2019System}. The plant is a real-time system, simulated by the vehicle dynamic model of CarSim.
The longitudinal speed is assumed to be constant, $v_x=16 \text{m/s}$, and the expected trajectory is shown in Fig. \ref{f:comparison_linear}. The system states and control inputs of this problem are listed in Table \ref{tab.state}, and the vehicle parameters are listed in Table \ref{tab.parameters}.

\begin{figure}[!htb]
\captionsetup{
              singlelinecheck = false,labelsep=period, font=small}
\centering{\includegraphics[width=0.45\textwidth]{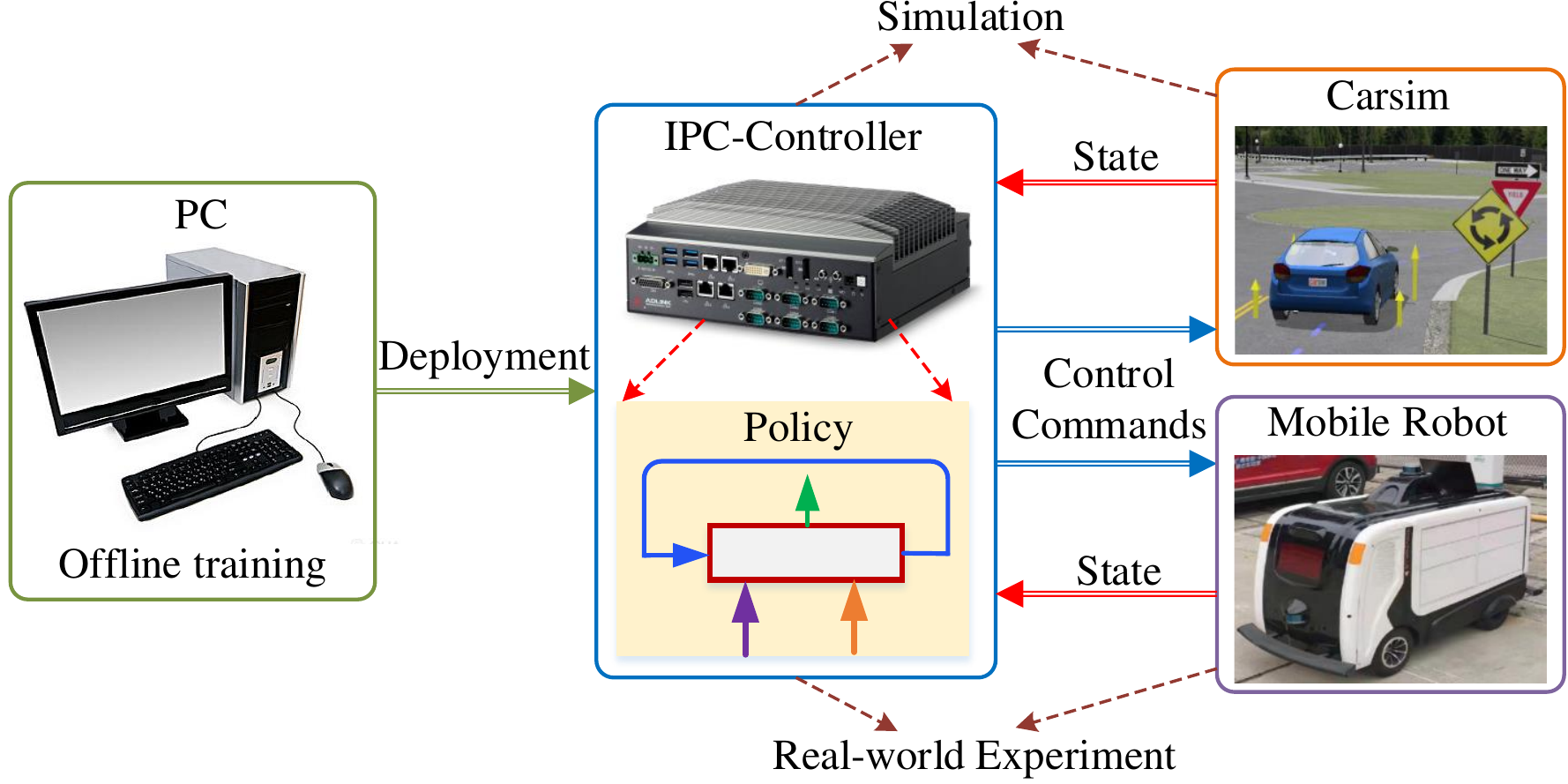}}
  \caption{Schematic view of the experimental setup.}
  \label{fig_HIL}
\end{figure}

\begin{table}[!htb]
	\caption{State and control input}
	\centering
	\label{tab.state}
		\begin{tabular}{l l l l}
			\hline\hline
			Mode &Name &Symbol&Unit\\ \hline
		state&Lateral velocity at center of gravity (CG) &$v_y$ & [m/s]  \\
		& Yaw rate  &$\omega_r$ & [rad/s] \\
		& Yaw angle &$\phi$ & [rad]\\
		& Lateral Position&$y$ & [m] \\
		input & Front wheel angle &$\delta$ & [rad] \\ 
			\hline\hline
		\end{tabular}
\end{table}

\begin{table}[!htb]
	\caption{Vehicle Parameters}
	\centering
	\label{tab.parameters}
			\begin{tabular}{l l l}
			\hline\hline
Name &Symbol&	Unit\\\hline
Longitudinal velocity at CG &$v_x$ & 16 [m/s] \\
Front tire cornering stiffness &$k_1$ & -88000 [N/rad] \\
Rear tire cornering stiffness &$k_2$ & -94000 [N/rad] \\
Mass &$m$ & 1500 [kg] \\
Distance from CG to front axle &$a$ & 1.14 [m] \\
Distance from CG to rear axle &$b$ & 1.40 [m] \\
Polar moment of inertia at CG &$I_z$ & 2420 [kg$\cdot\mathrm{m}^2$] \\
Tire-road friction coefficient &$\mu$ & 1.0 \\
System frequency &$f$ & 20 [Hz] \\ 
			\hline\hline
		\end{tabular}
\end{table}


\subsection{Problem Description}
The offline policy is trained based on the nonliner and non input-affine vehicle dynamics:
\begin{equation}
\nonumber
x = 
\begin{bmatrix}
  y  \\
  \phi \\
  v_y \\
  \omega_r
\end{bmatrix}, u =\delta, x_{i+1}=
\begin{bmatrix}
v_x \sin\phi + v_y \cos\phi\\
\omega_r\\
  \frac{F_{yf}\cos\delta + F_{yr}}{m} - v_x \omega_r\\
  \frac{aF_{yf}\cos\delta - bF_{yr}}{I_{z}}
   
\end{bmatrix}\frac{1}{f} + x_i,
\end{equation} 
where $F_{yf}$ and $F_{yr}$ are the lateral tire forces of the front and rear tires, respectively \cite{kong2015kinematic}. The lateral tire forces can be approximated according to the Fiala tire model
\begin{equation}
\nonumber
F_{y\#}  = 
\left\{
\begin{aligned}
&-C_\#\tan\alpha_\#\Big(\frac{C_\#^2(\tan\alpha_\#)^2}{27(\mu_\# F_{z\#})^2} -  \frac{{C_\#}\left |\tan\alpha_\# \right |}{3\mu_\# F_{z\#}} + 1\Big),\\ &\qquad \qquad \qquad \qquad \qquad \qquad |\alpha_\#|\le|\alpha_{\text{max},\#}|,\\
&  \mu_\# F_{z\#},&\\
&\qquad \qquad \qquad \qquad \qquad \qquad |\alpha_\#|>|\alpha_{\text{max},\#}|,
\end{aligned}
\right.
\end{equation} 
where $\alpha_{\#}$ is the tire slip angle, $F_{z\#}$ is the tire load, $\mu_\#$ is the friction coefficient, and the subscript $\# \in \{f,r\}$ represents the front or rear tires. The slip angles can be calculated from the relationship between the front/rear axle and the center of gravity (CG): 
\begin{equation}
\nonumber
\alpha_f = \arctan (\frac{v_y+a\omega_r}{v_x})-\delta, \quad \alpha_r = \arctan (\frac{v_y-b\omega_r}{v_x}). 
\end{equation} 
The loads on the front and rear tires can be approximated by: 
\begin{equation}
\nonumber
F_{zf} = \frac{b}{a+b}mg, \quad F_{zr} = \frac{a}{a+b}mg.
\end{equation} 
The utility function of this problem is set to be 
\begin{equation}
\nonumber
l(x_i,{r_i},u_{i-1}) = ([1,0,0,0]x_i-r_i)^2+10{u_{i-1}}^2+([0,0,0,1]{x_i})^2.
\end{equation}

Therefore, the policy optimization problem of this example can be formulated as:
\begin{equation}
\nonumber
\centering
\begin{aligned}
\min_\theta  &  \mathop{\mathbb{E}}_{
\small\begin{array}{ccc}
\small x_0\in\mathcal{X}, r_{1:{N_{\text{max}}}}\\
\end{array} } \Big\{V(x_{0},r_{1:{N_{\text{max}}}},{N_{\text{max}}};\theta)\Big\}\\
s.t. \quad &x_{i} = f(x_{i-1},u_{i-1}), \\
&u_{\text{min}}\leq u_{i-1} \leq u_{\text{max}},\\
&i\in[1,N_{\text{max}}].
\end{aligned}
\end{equation}
where $V(x_{0},r_{1:{N_{\text{max}}}},{N_{\text{max}}};\theta)=\sum_{i=1}^{N_{\text{max}}}l(x_i,r_i,u_{i-1})$, $u_{i-1}=\pi^{{N_{\text{max}}}-i+1}(x_{i-1},r_{i:{N_{\text{max}}}};\theta)$, $N_{\text{max}}=15$, $u_{\text{min}}=-0.2$ rad, and $u_{\text{max}}=0.2$ rad. 

\subsection{Algorithm Details}
The policy function is represented by a variant of RNN, called GRU (Gated Recurrent Unit). The input layer is composed of the states, followed by 4 hidden layers using rectified linear units (RELUs) as activation functions, with $128$ units per layer. The output layer is set as a $tanh$ layer, multiplied by $0.2$ to confront bounded control inputs. We use the Adam optimization method to update the policy with the learning rate of $2\times10^{-4}$ and the batch size of $256$.

\subsection{Result Analysis}
Given a nonlinear MPC problem, we can directly solve it with some optimization solvers, such as IPOPT \cite{Andreas2006Biegler} and BONMIN \cite{bonami2008algorithmic}, whose numerical solutions can be approximately regarded as the optimal policy. In the sequel, both IPOPT and BONMIN are implemented in a symbolic framework, called CasADi \cite{2018CasADi}. 

Since RMPC is an explicit solver of the traditional MPC problem, if the control inputs of RMPC are approximately equal to the optimal numerical solutions under different prediction horizons, we can immediately show that RMPC can adaptively choose the longest prediction step. We run Algorithm \ref{alg:RMPC} for 10 runs and calculate the policy error $e_N$ between the solution of IPOPT and RMPC at each iteration with different prediction steps ($N\in[1,15]$),
\begin{equation}
\nonumber
e_N= \mathop{\mathbb{E}}_{
\small\begin{array}{ccc}
\small x_0\in\mathcal{X}, r_{1:N}\\
\end{array} }
\left[
\frac{ \vert {u_{0}^{N}}^*(x_0,r_{1:N})-{\pi^N}(x_0,r_{1:N};\theta) \vert }{{u^N_{\text{max}}}^*-{u^N_{\text{min}}}^*}\right],
\end{equation}
where ${{u^N_{\text{max}}}^*}$ and ${{u^N_{\text{min}}}^*}$ are respectively the maximum and minimum value of ${u_{0}^{N}}^*(x_0,r_{1:N})$ for $\forall x_0 \in \mathcal{X}$, $\forall N \in [1,15]$. Fig. \ref{fig_err_train} plots policy error curves during training with different prediction steps. It is clear that all the policy errors decrease rapidly to a small value during the training process. In particular, after $10^4$ iterations, policy errors for all  $N\geq5$ 
reduce to less than 2\%. This indicates that Algorithm  \ref{alg:RMPC} has the ability to find the nearly optimal explicit policy of MPC problems with different prediction horizons.
\begin{figure}[!htb]
\captionsetup{
              singlelinecheck = false,labelsep=period, font=small}
\centering{\includegraphics[width=0.4\textwidth]{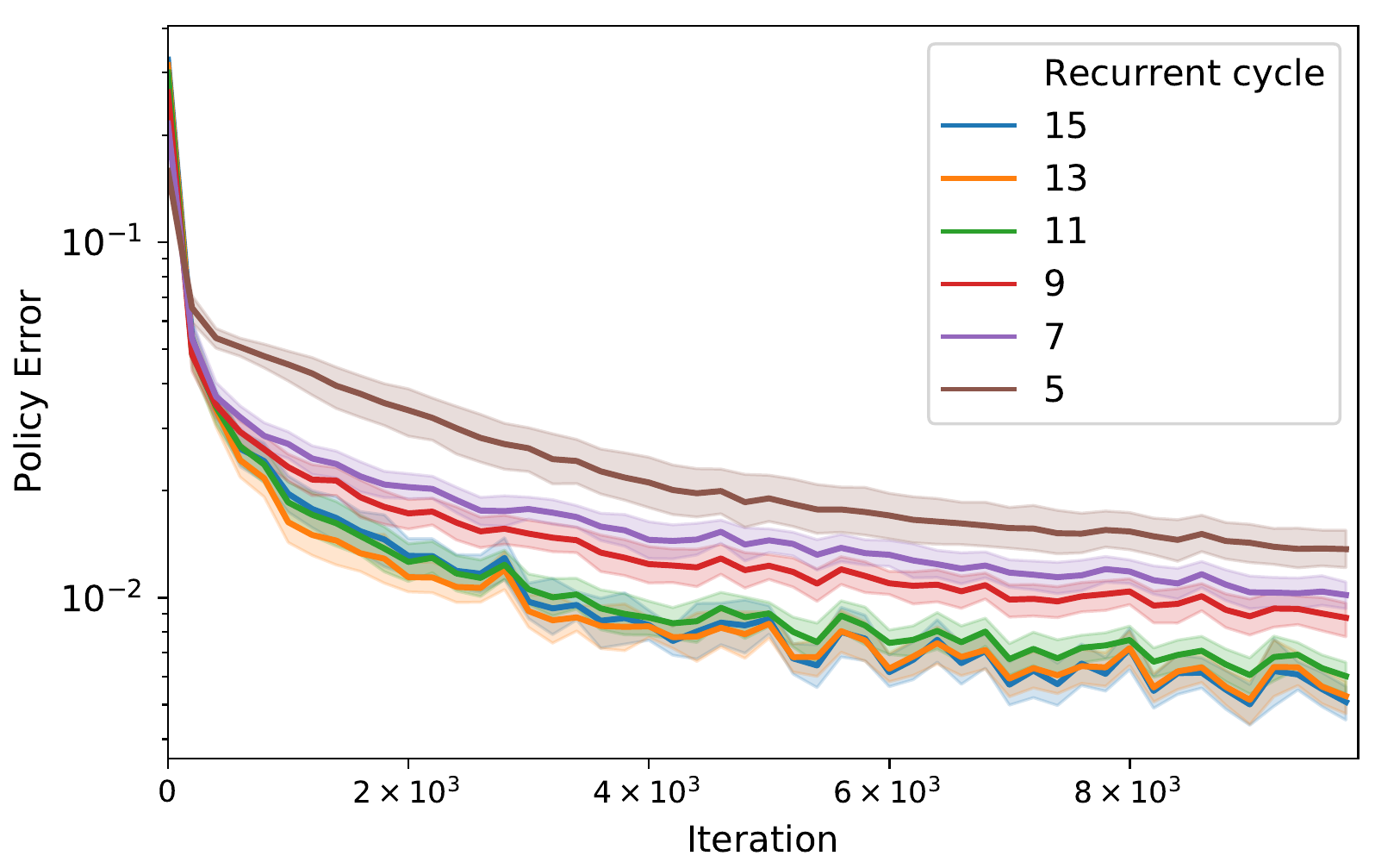}}
  \caption{Policy errors during training. Solid lines are average values over 10 runs. Shaded regions correspond to 95\% confidence interval.}
  \label{fig_err_train}
\end{figure}

Fig. \ref{fig_time} compares the calculation efficiency of RMPC and optimization solvers in online applications. It is obvious that the calculation time of the optimization solvers is much longer than RMPC, and the gap increases with the number of prediction steps. Specifically, when $N=c=15$, the IPOPT solver is about 5 times slower than RMPC (IPOPT for $26.2$ms, RMPC for $4.7$ms). This demonstrates the online effectiveness of the RMPC method.
\begin{figure}[!htb]
\captionsetup{justification =raggedright,
              singlelinecheck = false,labelsep=period, font=small}
\centering{\includegraphics[width=0.4\textwidth]{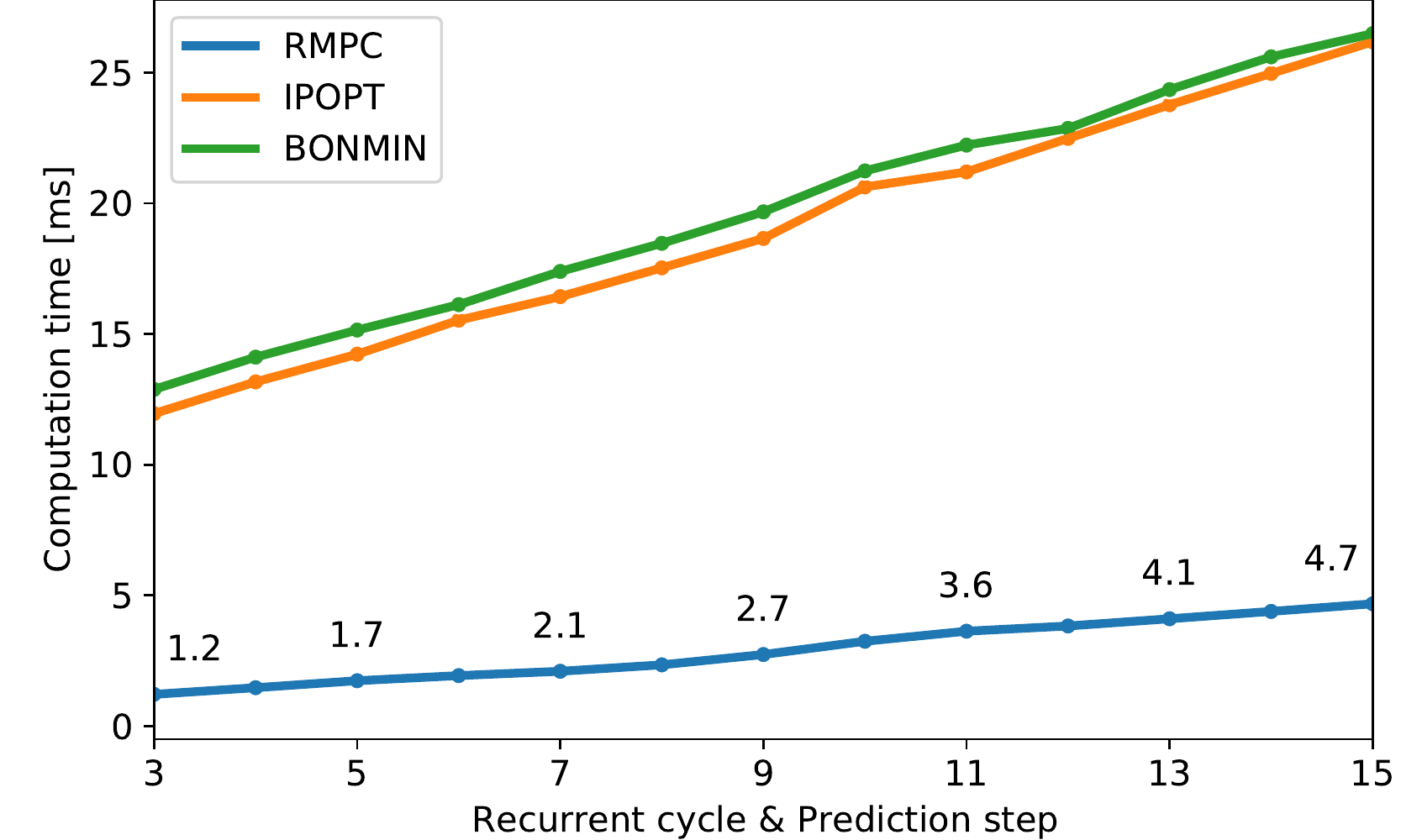}}
  \caption{RMPC vs other solvers in terms of online computation time.}
\label{fig_time}
\end{figure}

Fig. \ref{fig_loss} compares the policy performance of IPOPT and RMPC with different prediction horizons. The policy performance is measured by the cost-to-go of 200 steps (10s) during simulation starting from randomly initialized states, i.e., 
\begin{equation}
\label{eq.loss_simulation}
\centering
L=\sum_{i=1}^{200} l(x_i,r_i,u_{i-1}). 
\end{equation}
For all prediction domains $N$, RMPC performs as well as the solution of the IPOPT solver. Besides, more recurrent cycles (or long prediction steps) help to reduce the cost-to-go $L$.
\begin{figure}[!htb]
\captionsetup{
              singlelinecheck = false,labelsep=period, font=small}
\centering{\includegraphics[width=0.4\textwidth]{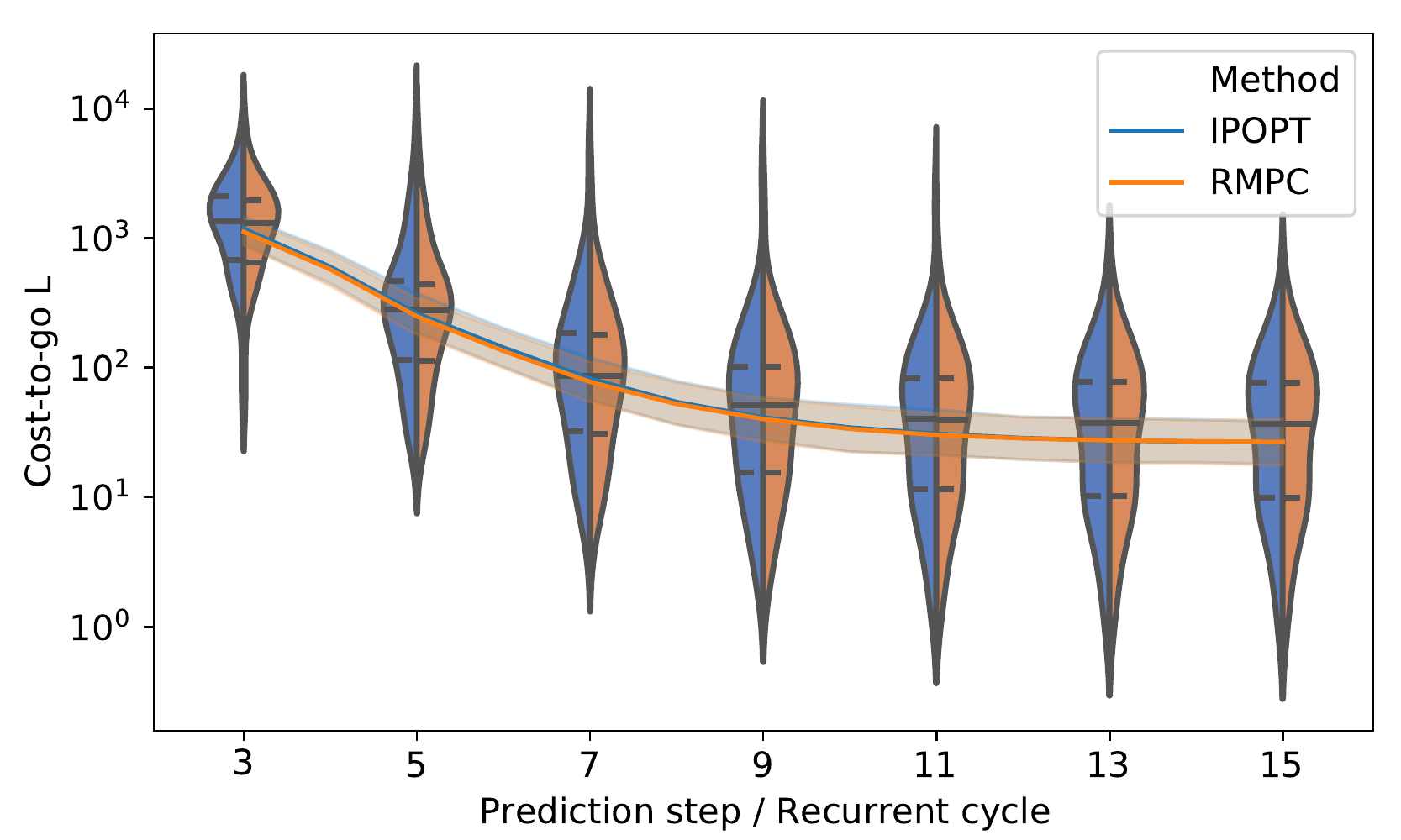}}
\caption{Performance comparison between RMPC and IPOPT. Solid lines are average values over 50 initialized states. Shaded regions correspond to 95\% confidence interval.}
\label{fig_loss}
\end{figure}

\begin{figure*}[!htb]
\centering
\captionsetup{justification =raggedright,
              singlelinecheck = false,labelsep=period, font=small}
\captionsetup[subfigure]{justification=centering}
\subfloat[\label{subFig:st1}]{\includegraphics[width=0.22\textwidth]{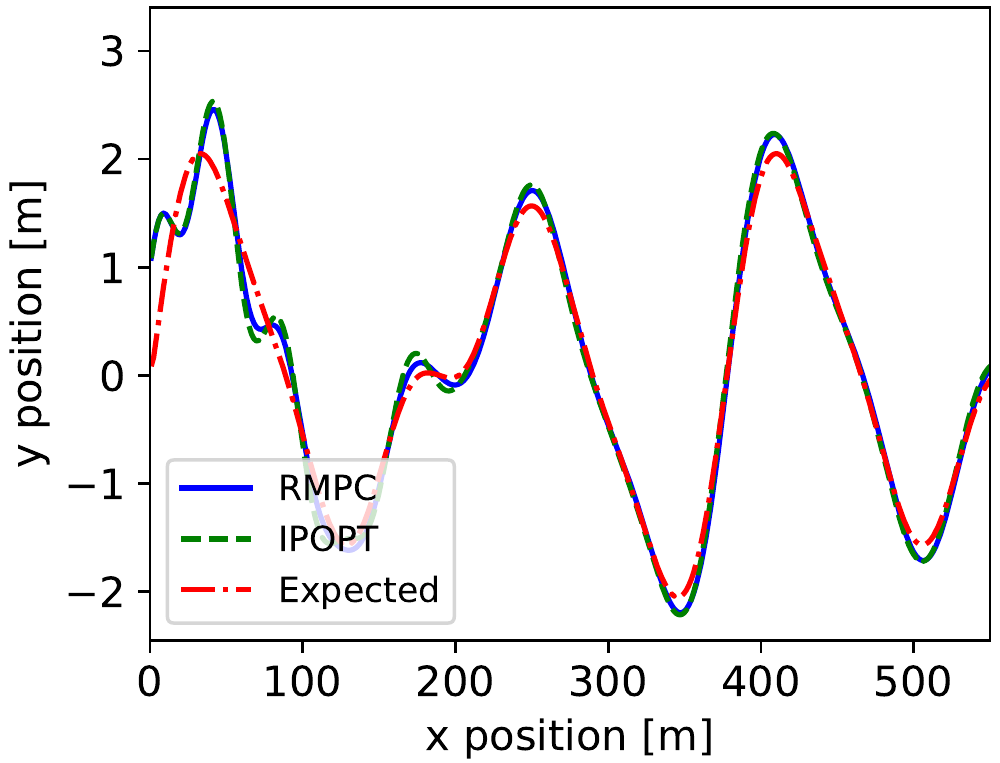}} \quad
\subfloat[\label{subFig:st2}]{\includegraphics[width=0.22\textwidth]{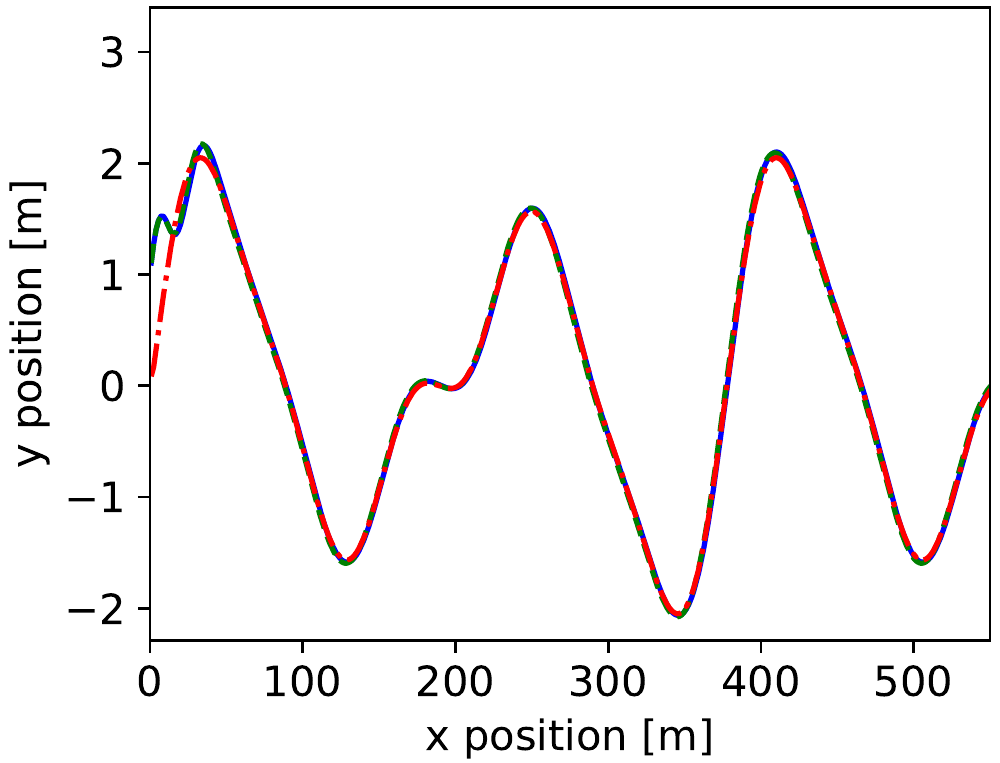}}\quad 
\subfloat[\label{subFig:st3}]{\includegraphics[width=0.22\textwidth]{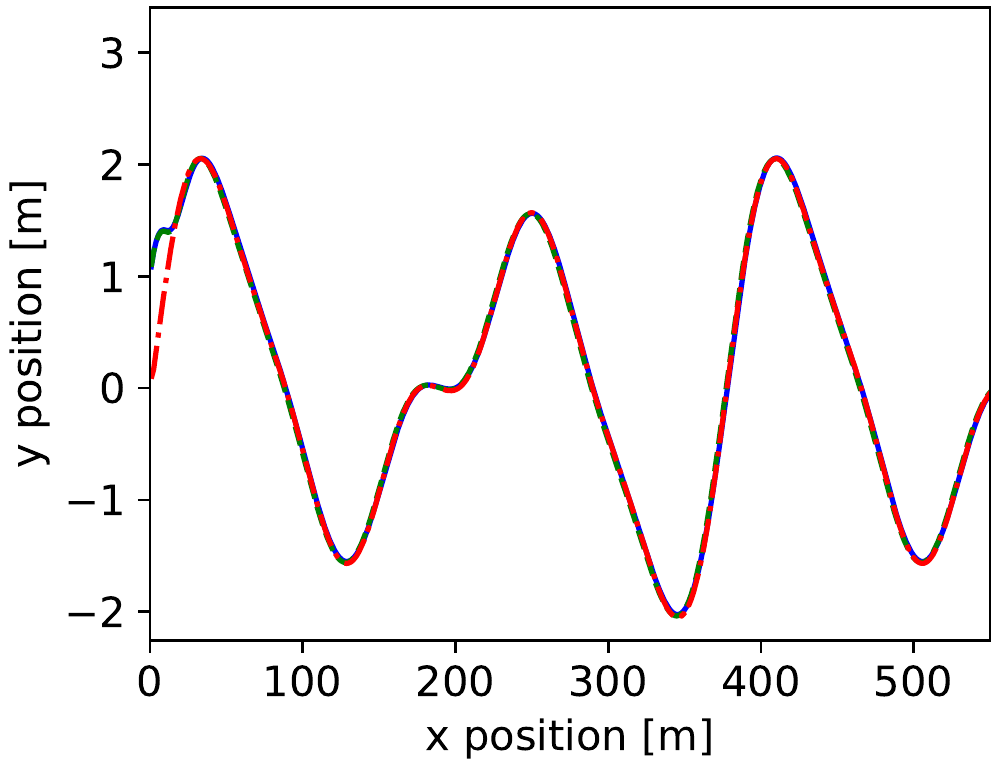}} \\
\subfloat[\label{subFig:sc1}]{\includegraphics[width=0.22\textwidth]{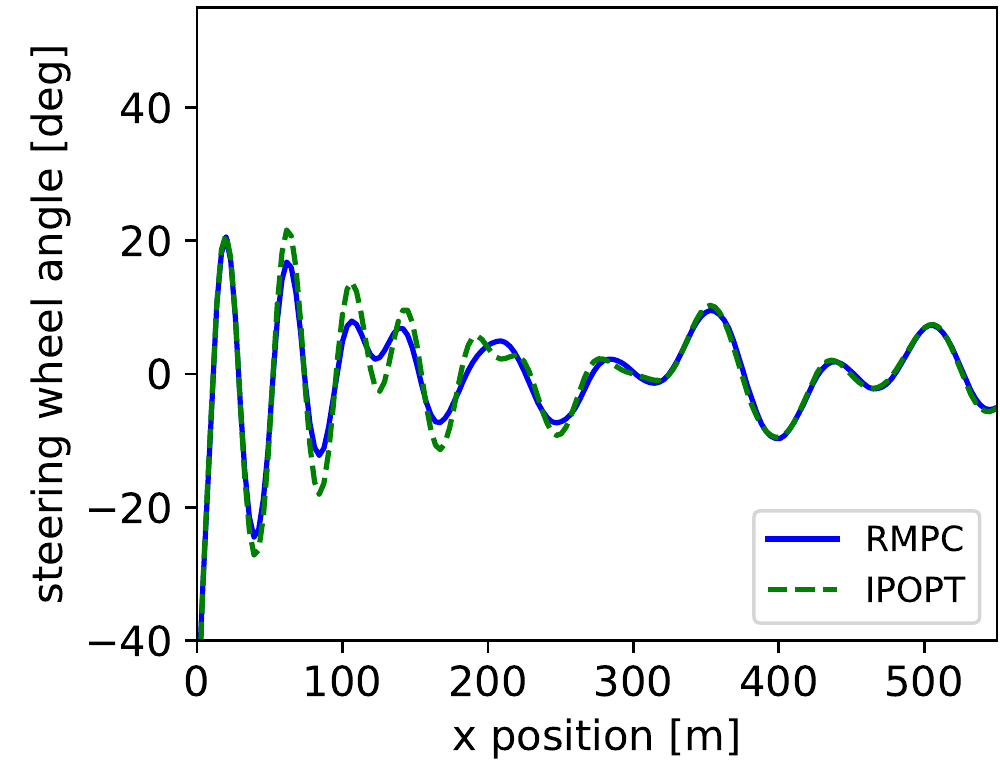}} \quad 
\subfloat[\label{subFig:sc2}]{\includegraphics[width=0.22\textwidth]{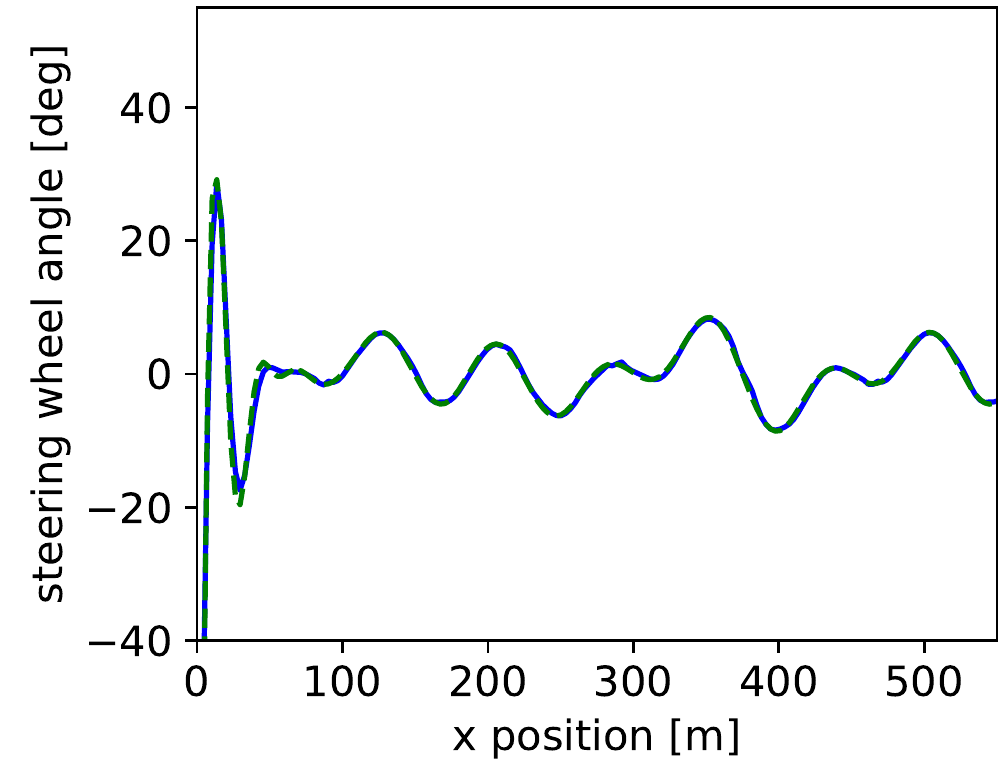}}\quad 
\subfloat[\label{subFig:sc3}]{\includegraphics[width=0.22\textwidth]{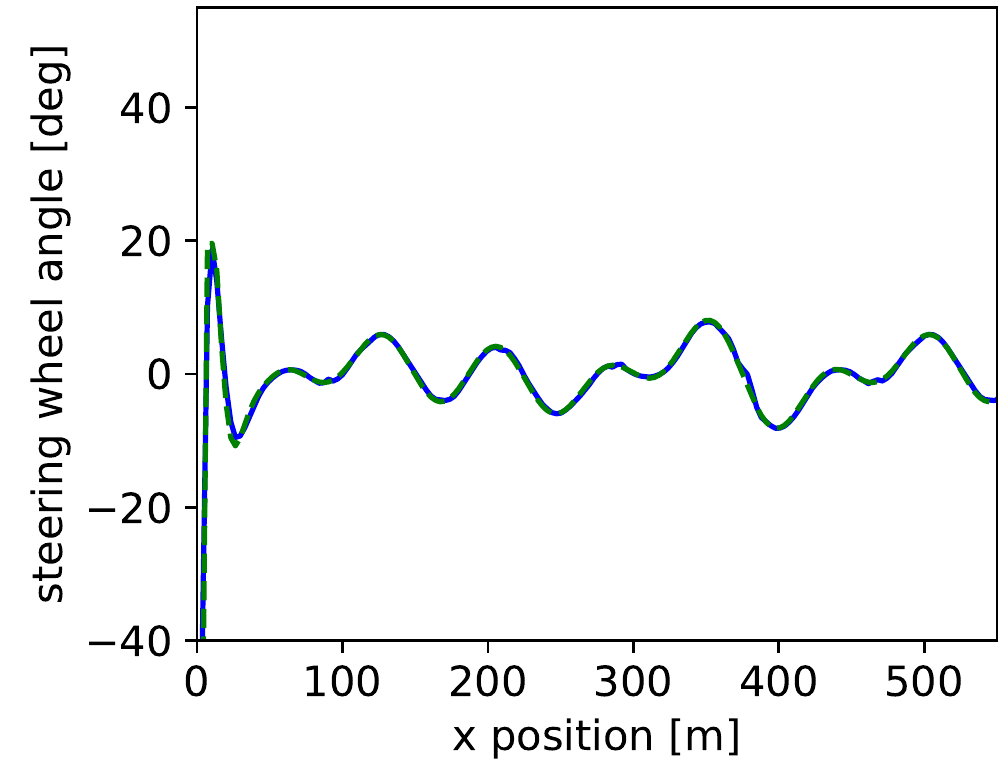}} 
\caption{Simulation results: trajectory and control inputs curves with different recurrent cycles $c$. (a) Trajectories when $c=7$. (b) Trajectories when  $c=11$.  (c) Trajectories when  $c=15$. (d) Control inputs when $c=7$. (e) Control inputs when $c=11$. (f) Control inputs when $c=15$.}
\label{f:comparison_linear}
\end{figure*}

In detail, Fig. \ref{f:comparison_linear} intuitively presents the trajectory curves and the corresponding control inputs of RMPC and the IPOPT solver. Obviously, the trajectory and control inputs generated by the RMPC controller almost overlap with the IPOPT controller. And the trajectory tracking error decreases significantly with the number of recurrent cycles. This explains the importance of adaptively selecting the optimal control input with the longest prediction horizon in real applications.

RMPC is an explicit MPC method, whose policy is learned offline based on predetermined system parameters. However, in practical applications, system parameters may vary due to online changes or inaccuracy measurements. To evaluate the robustness of RMPC  to different system parameters, we have changed certain system parameters and tested the performance of the fixed policy learned based on parameters in Table \ref{tab.parameters}.  The average tracking errors under different parameters are shown in Table \ref{tab.simulated_error}.  Results show that the final tracking performance is insensitive to the selected parameter changes. This implies that RMPC preserves good robustness property when the changes of system parameters are restricted within a reasonable range. This is mainly because the system usually behaves analogously as long as parameters do not change a lot. One can also add some parametric noises during the training process to further improve robustness. For those frequently changing and easily detected parameters, simply setting them as policy inputs can be a good choice.

\begin{table}[!htb]
\centering
\caption{Average Tracking Errors with Different System Parameters. The original parameters used for learning are shown in bold.}
	\label{tab.simulated_error}
		\begin{tabular}{l l l l l l l l}
		\hline\hline
$m$ [kg] &1200 &1300&1400&\textbf{1500}&1600&1700&1800\\
errors [cm]&1.76&1.65&1.54&1.46&1.39&	1.35&1.33
\\
\hline
$v_x$ [m/s] &13& 14 &15&\textbf{16} &17 &18 & 19 \\
errors [cm]&6.89&4.93&3.03&1.46&1.80&3.69&5.70\\
\hline
$\mu$ &0.7& 0.8 &0.9&\textbf{1.0} &1.1 &1.2 & 1.3 \\
errors [cm]&1.44&1.45&1.45&1.46&1.46&1.46&1.47\\
\hline
$k_1(\times 10^3)$ &-58 &-68 &-78 &\textbf{-88} &-98 &-108 &-118 \\
errors [cm]&1.61&1.40&1.39&1.46&1.56&1.67&1.77\\
\hline
$k_2(\times 10^3)$  &-64 &-74 &-84 &\textbf{-94} &-104 &-114  &-124 \\
errors [cm]&1.27&1.35&1.41&1.46&1.50&1.53&1.56\\
\hline\hline
\end{tabular}
\end{table}

To summarize, this example demonstrates the optimality, efficiency, and generality of the RMPC algorithm.

\section{Experimental Verification and Future Work}
\label{sec:experiment}
\subsection{Experimental Verification}
As shown in Fig. \ref{fig_HIL}, an IDRIVERPLUS four-wheeled robot is utilized to demonstrate the effectiveness of the proposed method in practical applications. For ease of understanding, this experiment is carried out by replacing the Carsim simulator in Fig. \ref{fig_HIL} with a real robot. Except for the vehicle parameters, all training details are the same as those in Section \ref{sec:simulation}. Note that the actual vehicle parameters are usually quite different from the theoretical model used for learning, since some parameters, such as tire cornering stiffness, are difficult to measure accurately. Therefore, the experimental results can also reflect the robustness of RMPC to inaccurate vehicle parameters.

We deployed the learned policy of RMPC in the four-wheeled robot, aiming to follow a sine-shaped reference path. Fig. \ref{f:comparison_experiment} shows the control results of RMPC and the IPOPT solver with different prediction steps. Although the variance of the steering wheel angle is larger than that in simulation due to the existence of system noise, both methods have achieved relatively good tracking performance. Table \ref{tab.tracking_error} compares the average tracking errors of these two methods under different number of prediction steps. Results show that RMPC achieved a smaller average tracking error than IPOPT in all cases. In particular,  when $c=3$, $c=9$, and $c=15$, RMPC reduces the tracking error by 56.2\%, 10.1\%, and 4.0\%, respectively. It is also obvious that the trajectory tracking error decreases significantly as the number of recurrent cycles increases, which provides evidence for the advantage of adaptively selecting the maximum prediction horizon. This real-world experiment demonstrates the  efficacy of RMPC in practical applications.
\begin{figure*}[!htb]
\centering
\captionsetup{justification =raggedright,
              singlelinecheck = false,labelsep=period, font=small}
\captionsetup[subfigure]{justification=centering}
\subfloat[\label{subFig:et1}]{\includegraphics[width=0.22\textwidth]{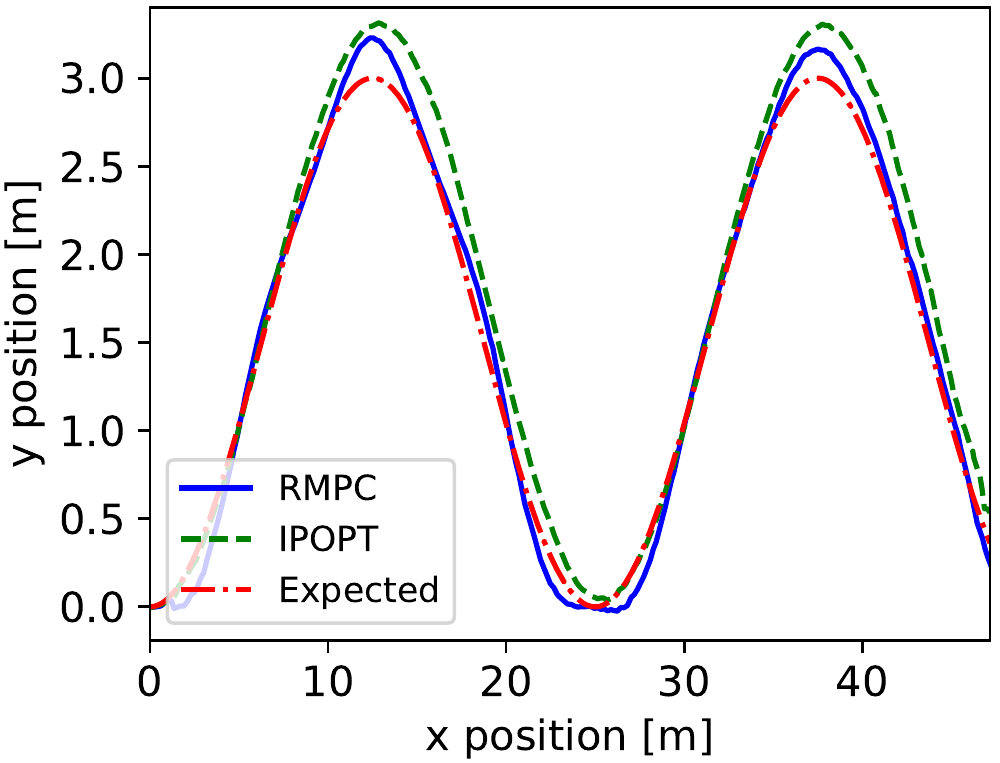}} \quad
\subfloat[\label{subFig:et2}]{\includegraphics[width=0.22\textwidth]{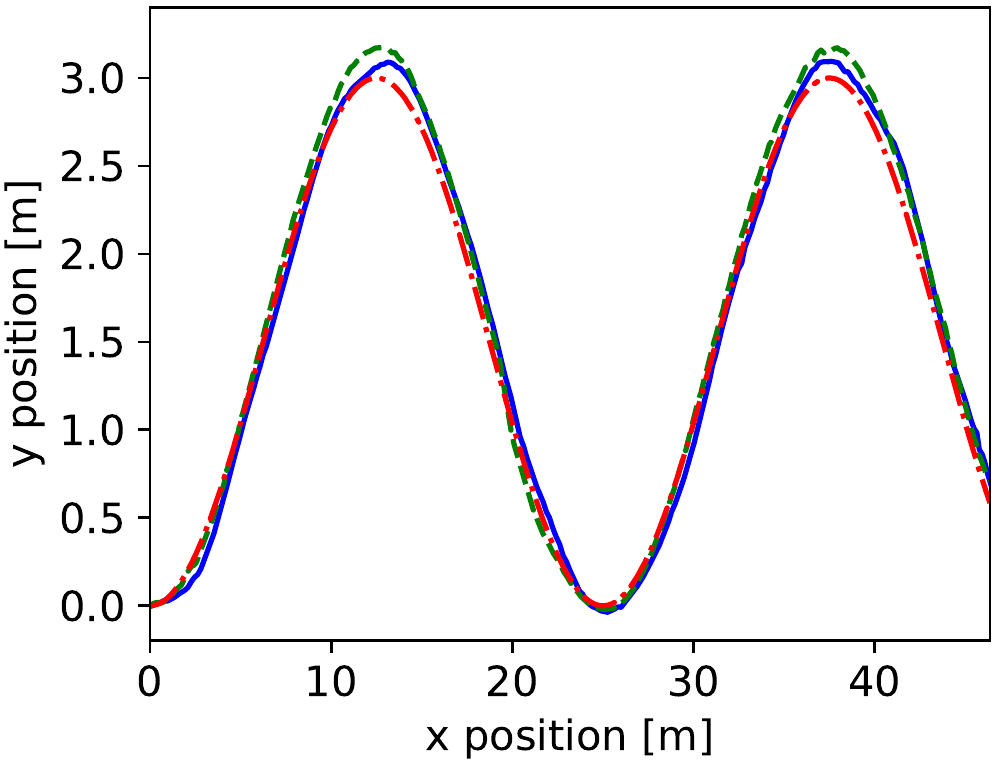}}\quad 
\subfloat[\label{subFig:et3}]{\includegraphics[width=0.22\textwidth]{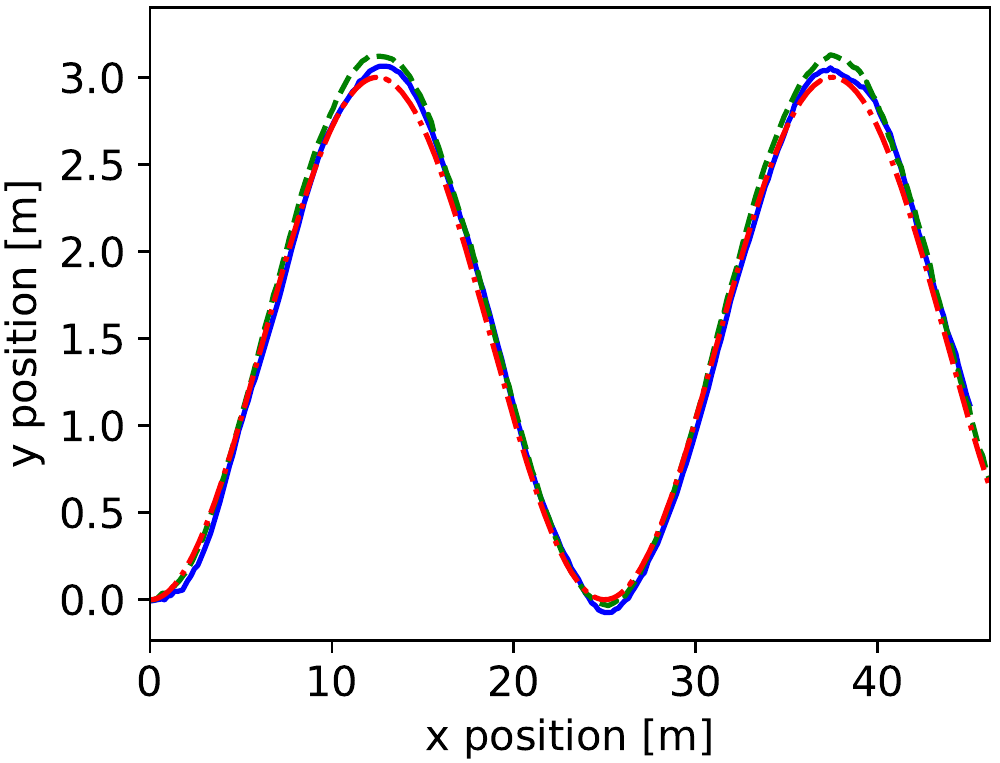}} \\
\subfloat[\label{subFig:ec1}]{\includegraphics[width=0.22\textwidth]{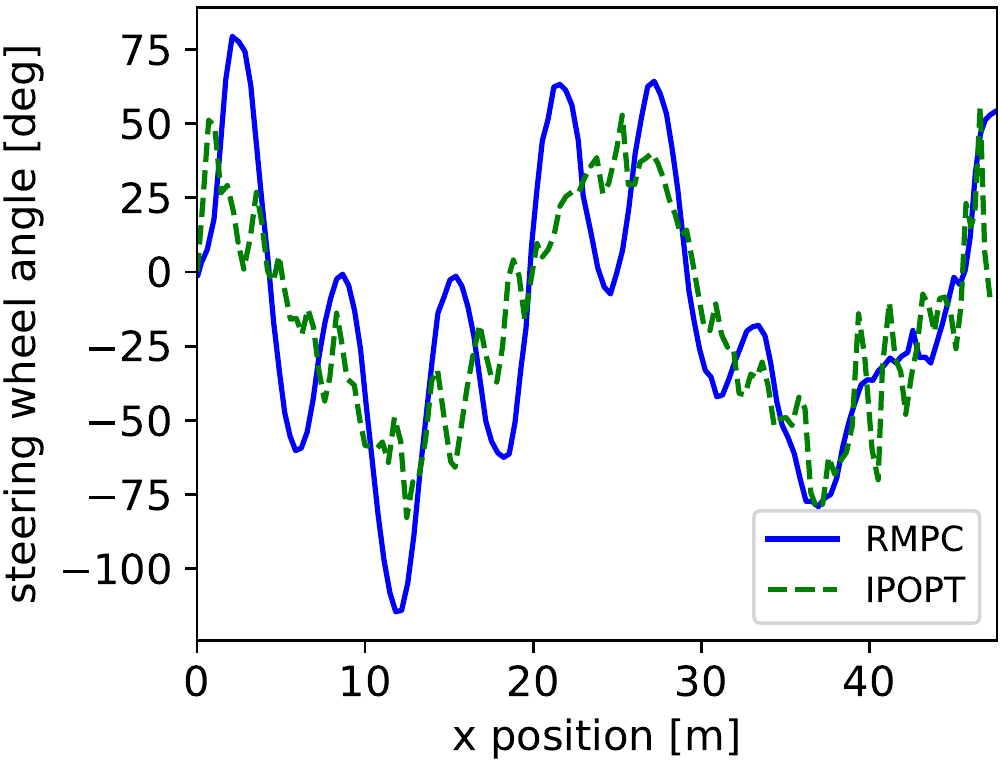}} \quad 
\subfloat[\label{subFig:ec2}]{\includegraphics[width=0.22\textwidth]{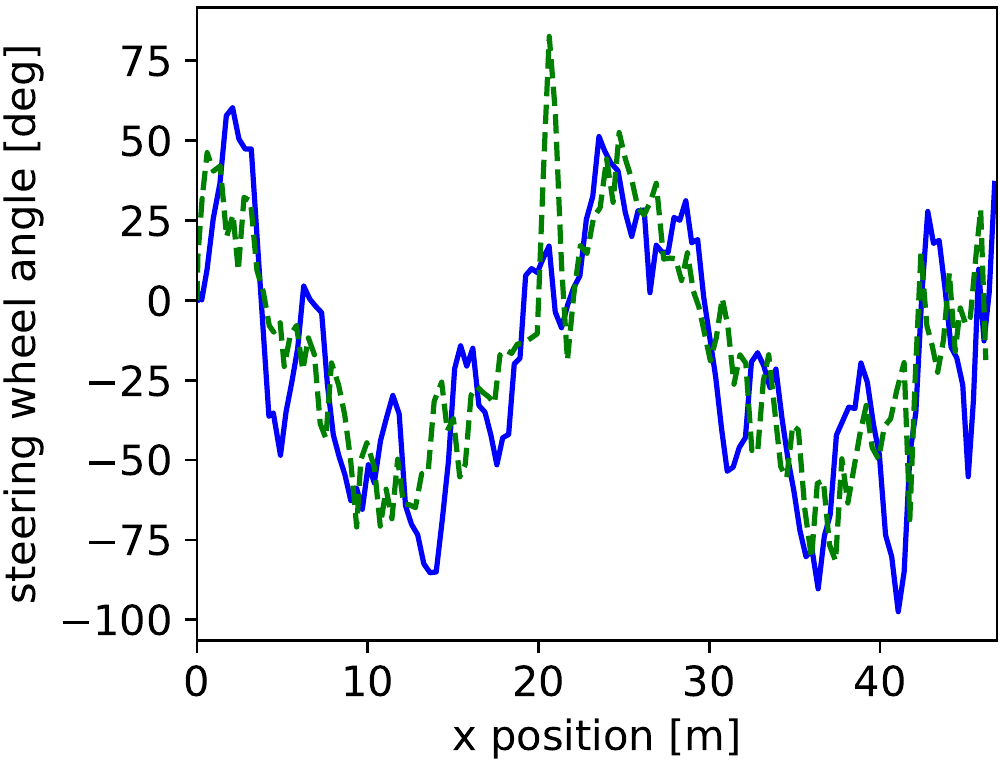}}\quad 
\subfloat[\label{subFig:ec3}]{\includegraphics[width=0.22\textwidth]{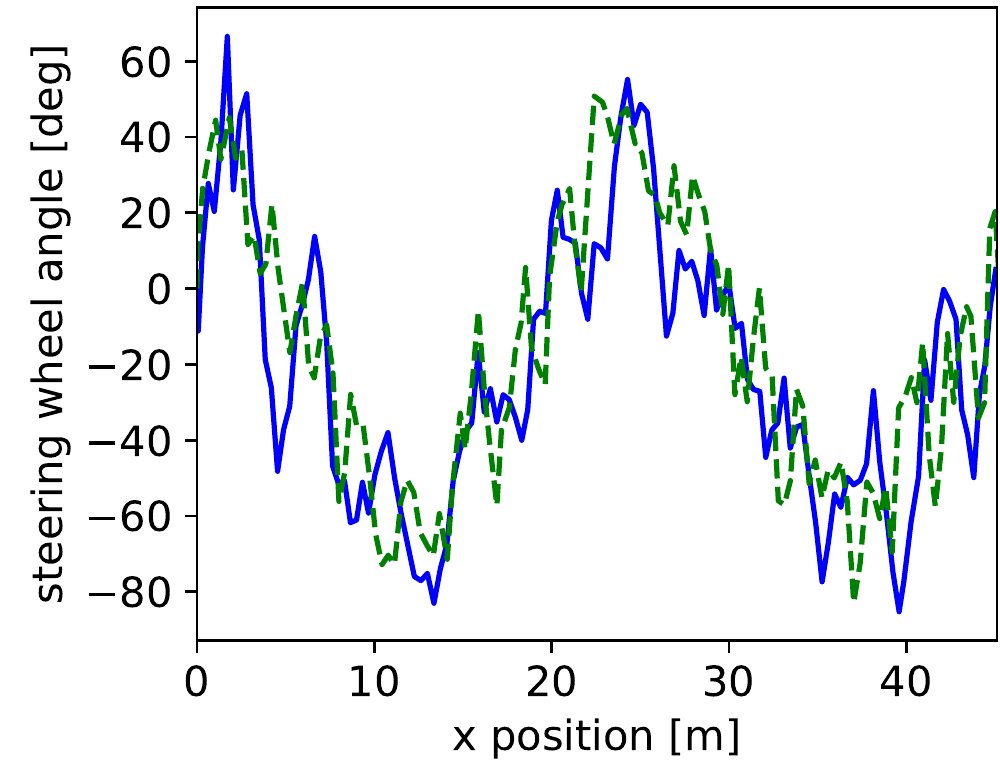}} 
\caption{Experiment results: trajectory and control inputs curves with different recurrent cycles $c$. (a) Trajectories when $c=3$. (b)Trajectories when  $c=9$.  (c) Trajectories when  $c=15$. (d) Control inputs when $c=3$. (e) Control inputs when $c=9$. (f) Control inputs when $c=15$.}
\label{f:comparison_experiment}
\end{figure*}

\begin{table}[!htb]
\caption{Average Tracking Errors}
\centering
\label{tab.tracking_error}
\begin{tabular}{l l l}
\hline\hline
Prediction step &RMPC&	IPOPT\\\hline
$c=3$ &10.43cm & 23.79cm \\
$c=9$ &9.74cm & 10.83cm \\
$c=15$ &6.75cm & 7.03cm \\
\hline\hline
\end{tabular}
\end{table}

\subsection{Limitations and Future Work}

In this paper, the proposed RMPC method is only suitable for problems without state constraints. In the future, we will extend RMPC to constrained cases by combining constrained policy optimization techniques \cite{duan2021adaptive}. Besides, the performance of the proposed RMPC method is evaluated only by the path tracking task of four-wheeled vehicles. More subsequent experiments for different systems and tasks will be addressed in further studies.  The future work also includes improving its robustness and investigating the stability of RMPC. 

\section{Conclusion}
\label{sec:conclusion}
This paper proposes the Recurrent Model Predictive Control (RMPC) algorithm to solve general nonlinear finite-horizon optimal control problems.
Unlike traditional MPC algorithms, it can make full use of the current computing resources and adaptively select the longest model prediction horizon. Our algorithm employs an RNN to approximate the optimal policy, which maps the system states and reference values directly to the control inputs. The output of the learned policy network after $N$ recurrent cycles corresponds to the nearly optimal solution of $N$-step MPC. A policy optimization objective is designed by decomposing the MPC cost function according to the Bellman's principle of optimality. The optimal recurrent policy can be obtained by directly minimizing the designed objective function, which is applicable for general nonlinear and non input-affine systems. The convergence and optimality of RMPC are further proved. We demonstrate its optimality, generality and efficiency using a HIL experiment. Results show that RMPC is over 5 times faster than the traditional MPC solver. The control performance of the learned policy can be further improved as the number of recurrent cycles increases. To prove its practicality, RMPC has also been applied to a real-world robot path-tracking task, and RMPC has achieved better path tracking accuracy than IPOPT in different prediction horizons.

\section*{Acknowledgment}
The authors are grateful to the Editor-in-Chief, the Associate Editor, and anonymous reviewers for their valuable comments.

\bibliographystyle{./Bibliography/IEEEtranTIE}
\bibliography{./Bibliography/BIB_xx-TIE-xxxx}

\end{document}